\theoremstyle{plain}
\newtheorem{theorem}{\protect\theoremname}
  \theoremstyle{plain}
  \theoremstyle{plain}
  \theoremstyle{plain}
   \newtheorem{lemma}{\protect\lemmaname}
  \theoremstyle{remark}
  \newtheorem{remark}{\protect\remarkname}
\theoremstyle{assumption}
    \theoremstyle{proposition}
\theoremstyle{algorithm}  
  \providecommand{\definitionname}{Definition}
  \providecommand{\lemmaname}{Lemma}
  \providecommand{\propositionname}{Proposition}
  \providecommand{\remarkname}{Remark}
\providecommand{\theoremname}{Theorem}
\providecommand{\conjecturename}{Conjecture}
\providecommand{\assumptionname}{Assumption}
\providecommand{\algorithmname}{Algorithm}
\begin{document}

 \title{Quickest Bayesian and non-Bayesian detection of false data injection attack in remote state estimation
  \thanks{Akanshu Gupta is currently with Microsoft India. Work done while at Indian Institute of Technology, Delhi. Email: akanshu3299@gmail.com }
  \thanks{Abhinava Sikdar is with the Department of Computer Science, Columbia University. Email: as6413@columbia.edu}
 \thanks{Arpan Chattopadhyay is with the Department of Electrical Engineering and the Bharti School of Telecom Technology and Management, IIT Delhi. Email: arpanc@ee.iitd.ac.in }
 \thanks{This work was supported by the  faculty seed grant and  professional development allowance  of Arpan Chattopadhyay at IIT Delhi. }
 \thanks{The conference precursor \cite{gupta2020quickest} of this work was accepted in IEEE International Symposium on Information Theory (ISIT), 2021.}
}

\author{
 Akanshu Gupta \hspace{0.2cm} Abhinava Sikdar  \hspace{0.2cm}Arpan Chattopadhyay \vspace*{-0.3in}
}

\maketitle
%
%



\ifdefined\SINGLECOLUMN
	\setkeys{Gin}{width=0.5\columnwidth}
	\newcommand{\figfontsize}{\footnotesize} 
\else
	\setkeys{Gin}{width=1.0\columnwidth}
	\newcommand{\figfontsize}{\normalsize} 
\fi

\begin{abstract} 
In this paper, quickest detection of false data injection attack on remote state estimation is considered. A set of $N$ sensors make noisy linear observations of a discrete-time linear process with Gaussian noise, and report the observations to a remote estimator. The challenge is the presence of a few potentially malicious sensors which can start strategically manipulating their observations at a random time  in order to skew the estimates.  The quickest attack detection problem  for a  known {\em linear} attack scheme in the Bayesian setting with a Geometric prior on the attack initiation instant is posed as a constrained Markov decision process (MDP),  in order to minimize the expected detection delay subject to a false alarm constraint, with the state involving the  probability belief at the estimator that the system is under attack. State transition probabilities are derived in terms of system parameters, and the structure of the  optimal policy is derived  analytically. It turns out that the optimal policy amounts to checking whether the  probability belief  exceeds a threshold. Next, generalized CUSUM based attack detection algorithm is proposed for the non-Bayesian setting where the attacker chooses the attack initiation instant in a particularly  adversarial manner. It turns out that computing the statistic for the generalised CUSUM test in this setting  relies on the same techniques developed to compute the state transition probabilities of the MDP. Numerical results demonstrate significant performance gain under the proposed algorithms  against competing algorithms.
\end{abstract}
\begin{keywords}
Secure  estimation, CPS security,  false data injection attack,  quickest detection, Markov decision process.
\end{keywords}

\section{Introduction}\label{section:introduction}
Networked estimation and control of physical processes and systems are indispensable components of cyber-physical systems (CPS) that involve integration of sensing,  computation,  communication and control to realize the ultimate combining of the physical systems and the cyber world. The applications of CPS are many-fold: intelligent transportation systems,  smart grids, networked  monitoring and control of industrial processes, environmental monitoring,  disaster management, etc. These applications heavily depend on reliable    estimation of a physical process or system  via   sensor observations collected over a wireless network \cite{chattopadhyay2020dynamic}.  However,    malicious attacks on these sensors pose a major security threat to CPS. One such common attack is  a {\em denial-of-service (DoS)} attack where the attacker attempts to block system resources  ({\em e.g.}, wireless jamming attack  \cite{guan2018distributed}). Contrary to DoS, we focus on {\em false data injection} (FDI) attacks which is a specific class of  integrity or deception attacks, where the sensor observations are modified before they are sent to the remote estimator  \cite{mo2009secure, mo2014detecting}.  The attacker can modify  the information   by breaking the cryptography of the   packets or by physically manipulating   the sensors ({\em e.g.}, placing  a cooler near a temperature sensor).

\begin{figure}[t!]
\begin{centering}
\begin{center}
\includegraphics[height=4.5cm, width=\linewidth]{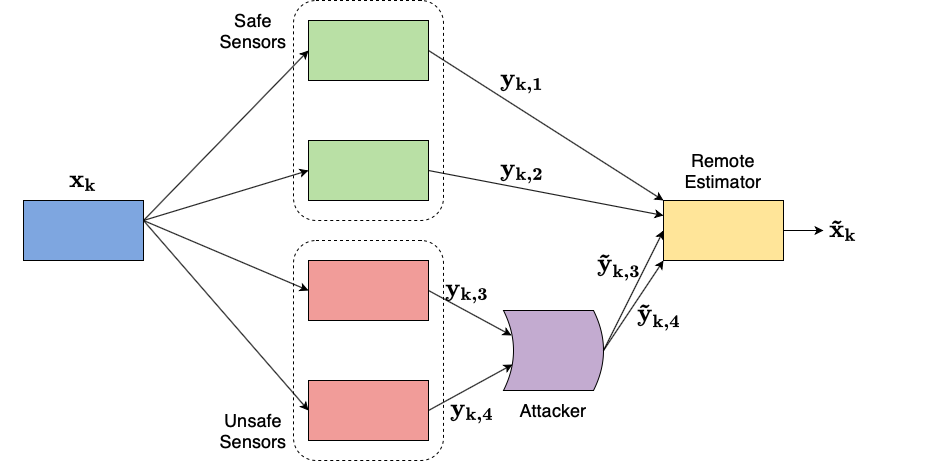}
\end{center}
\end{centering}
\vspace{-5mm}
\caption{False data injection attack in centralized  remote estimation.}
\label{fig:FDI-attack-image}
\vspace{-5mm}
\end{figure}

Recently, the problem of FDI attack and its countermeasures has received significant attention \cite{ding2020secure}. The literature in this area can be classified into two categories: (i) FDI on centralized systems, and (ii) FDI on distributed systems. In a distributed system, various entities  such as sensors, estimators, controllers and actuators are connected via a multi-hop wireless network, and FDI on one such component propagates to other components over time via the network. 

There has been a vast literature on FDI in centralized systems, particularly in the remote estimation setting. Such works include   several attempts to characterize and design   attack schemes:  undetectable linear deception attack \cite{guo2017optimal} in single sensor context, and also  conditions for    undetectable FDI  attack  \cite{chen2017optimal}. The     attack strategy to steer the control of CPS to a desired value under attack detection constraint  is provided in \cite{chen2016cyber}.  Literature on attack detection considered a number of   models and approaches; {\em e.g.}, attack    detection schemes for {\em noiseless} systems  \cite{pasqualetti2013attack}, comparing the observations from a few {\em known safe} sensors against potentially malicious sensor observations  \cite{li2017detection} to tackle the attack of \cite{guo2017optimal}, coding of   output of sensors  along with $\chi^2$ detector   \cite{miao2017coding}, Gaussian mixture model based detection and secure state estimation \cite{guo2018secure},  and the innovation vector based attack detection and secure estimation schemes \cite{mishra2017secure}. There have also been a number of works on secure estimation: see    \cite{pajic2017attack} for bounded noise case, \cite{liu2017dynamic} for  sparsity models to characterize the   location switching attack in a {\em noiseless}   system and    state recovery constraints, \cite{nakahira2018attack} for noiseless systems, \cite{chattopadhyay2019security, chattopadhyay2018secure, chattopadhyay2018attack} for linear Gaussian process and linear observation with Gaussian noise.   Attack detection,   estimation and control  for power  systems are addressed in \cite{manandhar2014detection, liang2017review, hu2017secure}.  Attack-resilient  control under FDI for noiseless systems is discussed in \cite{fawzi2014secure}.

Though relatively new, FDI on distributed systems is also increasingly being investigated;  see  \cite{guan2017distributed} for attack detection and secure estimation,  \cite{satchidanandan2016dynamic} for attack detection in networked control system via  {\em dynamic watermarking},   \cite{ge2019distributed} for distributed Krein space based attack detection in discrete time-varying systems,  and \cite{dorfler2011distributed} for distribured attack detection in power systems. On the other hand,   attack design for  distributed CPS is also being investigated; see \cite{choraria2019optimal} for linear  attack design against distributed state estimation to push all nodes' estimates to a desired target under a given attack detection constraint,    \cite{moradi2019coordinated} for attack design to maximize the network-wide estimation error via simple Gaussian noise addition, \cite{lu2019malicious} for conditions for perfect attack in a distributed  control system and  design algorithms for perfect and non-perfect attacks, etc.

While there have been several schemes (such as \cite{li2017detection} and the $\chi^2$ detector) to detect FDI attack in a multi-sensor setting, the optimal attack detector is not yet known even for the centralized systems. Moreover, the popular $\chi^2$ detector fails if the attacker judiciously injects false data in such a way that the innovation sequence of the Kalman state  estimator remains constant over time.   In light of this issue, our main contributions in this paper are as follows:
\begin{enumerate}
    \item In a multi-sensor setting with some safe and some potentially unsafe sensors, for the linear attack of \cite{guo2017optimal} with known attack parameters, we develop an optimal Bayesian  attack detector that minimizes the mean delay in attack detection subject to a constraint on the false alarm probability.   The problem is formulated as a partially observable Markov decision process (POMDP), and the optimal policy turns out to be a simple threshold policy on the belief probability that   an attack has already been launched.  
    
\item Though POMDP based Bayesian change detection techniques exist in the literature \cite{poor2009quickest}, our problem involves a far more complicated state that consists of the belief and the collection of past Kalman innovations; this occurs primarily because of the temporal dependence of innovation sequence after the attack and the uncertainty in the attack initiation instant. Also, we do not apply Shiryaev's test directly because that would be optimal only for $\alpha \rightarrow 0$ and not necessarily for all $\alpha$, where $\alpha$ is the probability of false alarm. 
    \item Computing the  belief probability recursively is a challenging problem due to the complicated temporal dependence of the innovation sequence available to the estimator, and we solve this by modeling the post-attack state estimation process as a  Kalman filter with a modified process and observation model. Our results in Section~\ref{section:recovering} explicitly show how to compute the true post-attack innovation distribution. 
    \item In order to reduce computational complexity, we propose a sub-optimal algorithm called QUICKDET  whose structure is motivated by the optimal policy derived from the POMDP formulation. QUICKDET applies a constant threshold rule on the belief, while the POMDP formulation yields a time-varying threshold. We provide a simulation-based technique to optimize this constant threshold, by using tools from simultaneous perturbation stochastic approximation (SPSA \cite{spall92original-SPSA}) and two timescale stochastic approximation \cite{borkar08stochastic-approximation-book}. Numerical results show that this sub-optimal algorithm significantly outperforms competing algorithms.
    \item The Bayesian quickest FDI detection algorithm is also extended to the case where the attacker can launch an  attack using multiple possible linear attack strategies.
    \item In the non-Bayesian setting, we adapt the generalized CUSUM test from the literature for quickest detection of FDI. It turns out that the test statistic for generalized CUSUM in our problem can be computed by using the same tools developed in Section~\ref{section:recovering}. 
\end{enumerate}
The rest of the paper is organized as follows. The system model is described in Section~\ref{section:system-model}. Section~\ref{section:recovering} develops the necessary theory for calculating the state transition probabilities for the POMDP formulation. POMDP formulation for FDI attack detection in the Bayesian setting with known linear attack parameters is provided in Section~\ref{section:known-attack-scheme}. Quickest attack detection algorithm for the non-Bayesian setting is provided in Section~\ref{section:quickdet-non-Bayesian}.   Numerical results are provided in Section~\ref{section:numerical-results}, followed by the conclusions in Section~\ref{section:conclusion}. All proofs are provided in the appendices.

\section{System model}\label{section:system-model}
In this paper, bold capital letters, bold small letters,  and capital letters with caligraphic font   will denote matrices, vectors and sets respectively. For any two square matrices $\bm{M}_1$ and $\bm{M}_2$, the block diagonalization operator is defined by  
$Blkdiag(\bm{M}_1, \bm{M}_2) \doteq \begin{bmatrix}
           \bm{M}_1 \,\,\,\,  \bm{0} \\
           \,\, \bm{0}  \,\,\,\,\, \bm{M}_2
         \end{bmatrix}$. Also, the transpose of a matrix $\bm{M}$ is denoted by $\bm{M}'$.

\subsection{Sensing and estimation model}
We consider a set of sensors $\mathcal{N} \doteq \{1,2,\cdots,N\}$ sensing a discrete-time process $\{\bm{x}_k\}_{k \geq 1}$ which is a linear Gaussian process with the following dynamics:
\begin{equation}\label{eqn:process-equation}
    \bm{x}_{k+1}=\bm{A} \bm{x}_k+\bm{w}_k
\end{equation}
where $\bm{x}_k \in \mathbb{R}^{q \times 1}$ is vector-valued, $\bm{A} \in \mathbb{R}^{q \times q}$ is the process matrix, and $\bm{w}_k \sim \mathcal{N}(\bm{0}, \bm{Q})$ is the Gaussian process noise i.i.d. across $k$. 
The observation made by sensor~$i$ at time~$k$ is:
\begin{equation}\label{eqn:observation-equation}
    \bm{y}_{k,i}=\bm{C}_i \bm{x}_k + \bm{v}_{k,i}
\end{equation}
where $\bm{C}_i$ is a matrix of appropriate dimensions and $\bm{v}_{k,i} \sim \mathcal{N}(\bm{0}, \bm{R}_i)$ is the Gaussian observation noise at sensor~$i$ at time~$k$, which is i.i.d. across $k$ and independent across $i$. We assume that $(\bm{A}, \bm{Q}^{\frac{1}{2}})$ is stabilizable and $(\bm{A}, \bm{C}_i)$ is detectable for all $i \in \mathcal{N}$. 
The complete observation from all sensors at time~$k$ is denoted  by $\bm{y}_k \doteq (\bm{y}_{k,1}', \bm{y}_{k,2}', \cdots, \bm{y}_{k,N}' )'$ which can be written as:
\begin{equation}\label{eqn:consolidated-observation-model}
    \bm{y}_k =\bm{C} \bm{x}_k + \bm{v}_k
\end{equation}
where $\bm{C} \doteq  (\bm{C}_1', \bm{C}_2',\cdots,  \bm{C}_N')'$ is the equivalent observation  matrix and $\bm{v}_k$ is the zero mean Gaussian observation  noise with covariance matrix $Blkdiag(\bm{R}_1, \bm{R}_2, \cdots, \bm{R}_N)$.

\subsection{Process estimation under no attack} 

For the model in \eqref{eqn:consolidated-observation-model}, a standard Kalman filter  \cite{anderson1979optimal} is used to estimate $\hat{\bm{x}}_k$ from $\{\bm{y}_j\}_{j \leq k}$ in order to minimize the mean squared error (MSE):
\begin{eqnarray}
\bm{\hat{x}}_{k+1|k}&=& \bm{A} \bm{\hat{x}}_k \nonumber\\
\bm{P}_{k+1|k} &=& \bm{A} \bm{P}_k \bm{A}'+\bm{Q} \nonumber\\
\bm{K}_{k+1} &=& \bm{P}_{k+1 | k} \bm{C}' (\bm{C} \bm{P}_{k+1|k} \bm{C}' +\bm{R})^{-1} \nonumber\\
\bm{\hat{x}}_{k+1} &=& \bm{\hat{x}}_{k+1|k} +\bm{K}_{k+1} (\bm{y}_{k+1}-\bm{C} \bm{\hat{x}}_{k+1|k} )  \nonumber\\
\bm{P}_{k+1} &=& (\bm{I}-\bm{K}_{k+1} \bm{C}) \bm{P}_{k+1|k}, \label{eqn:kalman-filter-for-single-sensor}
\end{eqnarray}
where $\hat{\bm{x}}_{k+1}=\mathbb{E}(\bm{x}_{k+1} | \bm{y}_0,\bm{y}_1,\cdots, \bm{y}_{k+1})$ is the MMSE estimate and $\bm{P}_{k+1}$ is the the  error covariance matrix for this estimate. From \cite{anderson1979optimal}, we know that   $\lim_{k \rightarrow \infty} \bm{P}_{k+1|k} =\bm{\underbar{P}}$ exists and is the unique fixed point to the  {\em Riccati equation} which is basically the $\bm{P}_{k+1|k}$ iteration. 
Let  us also define the innovation vector from sensor~$i$ at time~$k$ as  $\bm{z}_{k,i}\doteq \bm{y}_{k,i}-\bm{C}_i \bm{\hat{x}}_{k|k-1}$, and the collective innovation as $\bm{z}_k \doteq \bm{y}_k-\bm{C} \bm{\hat{x}}_{k|k-1}$. It is well-known  \cite{anderson1979optimal} that $\{\bm{z}_k\}_{k \geq 1}$ is a zero-mean Gaussian sequence   independent across time and whose steady-state covariance matrix   is $\bm{\Sigma}_{\bm{z}}\doteq(\bm{C} \bm{\underbar{P}}\bm{C}'+\bm{R})$. Hence, in most cases,  the residue-based $\chi^2$ detector is used at the remote estimator's side for any attack or anomaly detection, which has the following mathematical form:
\begin{equation}
    \sum_{k=\tau-J+1}^{\tau} \bm{z}_k' \bm{\Sigma}_{\bm{z}}^{-1} \bm{z}_k \mathop{\gtrless}_{H_0}^{H_1} \eta
\end{equation}
where the null Hypothesis $H_0$ represents no attack or anomaly, and  $H_1$ represents the presence of attack or anomaly. Here $J$ is a pre-specified window size  and $\eta$ is a pre-specified threshold which can be  tuned to control the false alarm probability.

\subsection{FDI  attack}\label{subsection:attack-model}
If a sensor~$i$ is under FDI attack, the observation sent to the remote estimator from this sensor becomes: 
\begin{equation}\label{eqn:observation-equation-under-attack}
    \tilde{\bm{y}}_{k,i}=\bm{C}_i \bm{x}_k + \bm{v}_{k,i}+\bm{e}_{k,i}
\end{equation}
where $\bm{e}_{k,i}$ is the  false data injected by sensor~$i$ at time~$k$. Consequently,  \eqref{eqn:consolidated-observation-model} is modified to:
\begin{equation}\label{eqn:consolidated-observation-under-attack}
    \tilde{\bm{y}}_k =\bm{C} \bm{x}_k + \bm{v}_k+ \bm{e}_k
\end{equation}

Let us recall  {\em linear attacks} for the single sensor case \cite{guo2017optimal} where, at time $k$, the malicious sensor modifies the innovation  as $\bm{\tilde{z}}_k=\bm{T} \bm{z}_k+\bm{b}_k$, where $\bm{T}$ is a square matrix and $\bm{b}_k \sim N (\bm{0},\bm{\Sigma}_{\bm{b}})$ is  i.i.d. Gaussian random vector sequence. The authors of  \cite{guo2017optimal} had shown that $\bm{\tilde{z}}_k \sim N(\bm{0},\bm{\Sigma}_{\bm{\tilde{z}}})$ under steady state, where $\bm{\Sigma}_{\bm{\tilde{z}}}=\bm{T} \bm{\Sigma}_{\bm{z}} \bm{T}'+\bm{\Sigma}_{\bm{b}}$. Hence, by ensuring  $\bm{\Sigma}_{\bm{\tilde{z}}}=\bm{\Sigma}_{\bm{z}}$, one can preserve the distribution of  $\{\bm{\tilde{z}}_k\}_{k \geq 1}$ in the single sensor case, and hence the detection probability will remain same even under FDI, under the $\chi^2$ detector. It was also shown in \cite{guo2017optimal}   that inverting the sign of the innovation   ({\em i.e.},   $\bm{T}=-\bm{I}$ and $\bm{b}_k=\bm{0}$) maximizes the MSE. Obviously, $\bm{T}=\bm{I}$ and $\bm{b}_k=\bm{0}$ imply that there is no attack.
  
 In this paper, we assume that there is a set of sensors $\mathcal{S} \subset \mathcal{N}$ which are safe, i.e., the sensor belonging to $\mathcal{S}$ can not be attacked. Let the set of potentially unsafe sensors be denoted by $\mathcal{A} \doteq \mathcal{N}-\mathcal{S}$. Clearly, a general  $\bm{T}$ matrix as the single sensor case will not be useful here; instead, we assume that $\bm{T} = Blkdiag (\bm{T}_{\mathcal{S}}, \bm{T}_{\mathcal{A}})=Blkdiag ( Blkdiag(\{\bm{T}_i\}_{i \in \mathcal{S}}), Blkdiag(\{\bm{T}_i\}_{i \in \mathcal{A}})  )$ where $\bm{T}_i$ is an identity matrix of appropriate dimension if   $i \in \mathcal{S}$. Also, for $i \in \mathcal{S}$, the added noise $\bm{b}_{k,i}=\bm{0}$ for all time $k \geq 1$, and, for $i \in \mathcal{A}$, we have $\bm{b}_{k,i} \sim N(\bm{0}, \Sigma_{\bm{b}_i})$ i.i.d. across $k$. Consequently, the modified innovation at node $i \in \mathcal{A}$ at time~$k$ is given by:
 \begin{equation}\label{eqn:modified-innovation}
     \tilde{\bm{z}}_{k,i}=\bm{T}_i \bm{z}_{k,i}+ \bm{b}_{k,i}
 \end{equation}

  We  define $\bm{z}_{k,\mathcal{A}}, \tilde{\bm{z}}_{k,\mathcal{A}}, \bm{y}_{k,\mathcal{A}}, \tilde{\bm{y}}_{k,\mathcal{A}}, \bm{C}_{\mathcal{A}}$ the components of $\bm{z}_k, \tilde{\bm{z}}_k, \bm{y}_k, \tilde{\bm{y}}_k, \bm{C}$ coming from sensors of $\mathcal{A}$; similar notation is used for the components corresponding to the safe sensors $\mathcal{S}$ also. We also define by $\tilde{\bm{x}}_k$ the estimate at time~$k$ generated by a Kalman filter under FDI  attack. The history available to the remote estimator at time~$k$ is denoted by $\mathcal{H}_k \doteq \{\tilde{\bm{z}}_j: j \leq k\}$.
  
 We also define $\tilde{\bm{x}}_k $ as the   estimate of $\bm{x}_k$ by applying a standard Kalman filter on  $\mathcal{H}_k$. Also, let  $\tilde{\bm{x}}_{k| \mathrm{a}} \doteq \mathbb{E}(\bm{x}_k | \mathcal{H}_k, \mathrm{a})$ denote the MMSE estimate of $\bm{x}_k$ given $\mathcal{H}_k$ and the event $\mathrm{a}$ that attack has already happened.

  For the multi-sensor setting,  \cite{li2017detection} proposed an   attack detection algorithm under the presence of a few known {\em safe} sensors. Unfortunately, the detection algorithm of \cite{li2017detection} and other detection algorithms from the literature do not have any optimality proof. On the other hand, if the attacker ensures a constant $\tilde{\bm{z}}_k$ for all $k$ such that $\sum_{k=\tau-J+1}^{\tau} \tilde{\bm{z}}_k' \bm{\Sigma}_{\tilde{\bm{z}}}^{-1} \tilde{\bm{z}}_k < \eta$, then the $\chi^2$ detector fails. These inadequacies in the literature   motivate the quickest attack detection problem formulation in this paper.

\section{Recovering true innovation and estimates from FDI attacked observations}\label{section:recovering}
In this section, we will develop a technique for computing  $\tilde{\bm{x}}_{k| \mathrm{a}}$ under the assumption that the attack initiation instant $t$ is known;  this will also yield the  statistics of $\tilde{\bm{z}}_k$. These results are necessary to calculate the state transition probabilities of the POMDP formulation in Section~\ref{section:known-attack-scheme}. These results also show that linear attack changes the distribution of innovations and estimates in a multi-sensor scenario with a few safe sensors. Throughout this section, we assume that $\bm{T}$ is known to the remote estimator.

\subsection{Computing $\tilde{\bm{x}}_{k|\mathrm{a}}$}
The  innovation from the unsafe sensors for $k \geq t$ is: 
\begin{eqnarray}\label{eqn:equivalent-observation-unsafe-sensors}
    \tilde{\bm{z}}_{k,\mathcal{A}} &=& \bm{T}_{\mathcal{A}} (\bm{y}_{k,\mathcal{A}} - \bm{C}_\mathcal{A} \bm{A}\tilde{\bm{x}}_{k-1}) + \bm{b}_k \nonumber\\
 \implies   \bm{T}_\mathcal{A}^{-1} \tilde{\bm{z}}_{k,\mathcal{A}} 
    &=& (\bm{y}_{k,\mathcal{A}} + \bm{T}_\mathcal{A}^{-1} \bm{b}_k) - \bm{C}_\mathcal{A} \bm{A}\tilde{\bm{x}}_{k-1} \nonumber \\
    &=& \tilde{\tilde{\bm{y}}}_{k,\mathcal{A}} - \bm{C}_\mathcal{A} \bm{A}\tilde{\bm{x}}_{k-1}   
\end{eqnarray}
where $\tilde{\tilde{\bm{y}}}_{k,\mathcal{A}} \doteq  \bm{y}_{k,\mathcal{A}} + \bm{T}_\mathcal{A}^{-1} \bm{b}_k$, $k \geq t$.

It is important to note that,   for $k\geq t$, the estimator does not observe $\bm{y}_{k,\mathcal{A}}$. However, after attack, the estimator can calculate   $\tilde{\tilde{\bm{y}}}_{k,\mathcal{A}}$ from \eqref{eqn:equivalent-observation-unsafe-sensors},  since the estimator knows  $\tilde{\bm{z}}_{k,\mathcal{A}}$. Hence, we can define a new observation model for the estimator as follows: 
\begin{eqnarray}
    \tilde{\tilde{\bm{y}}}_{k,\mathcal{S}} &=& \bm{C}_\mathcal{S} \bm{x}_{k} + \bm{v}_{k,\mathcal{S}} = \bm{y}_{k,\mathcal{S}} \nonumber \\ 
    \tilde{\tilde{\bm{y}}}_{k,\bm{A}} &=& \bm{C}_\mathcal{A} \bm{x}_{k} + \underbrace{ \bm{v}_{k,\mathcal{A}} + \bm{T}_\mathcal{A}^{-1} \bm{b}_k}_{\doteq \tilde{\tilde{\bm{v}}}_{k,\mathcal{A}}} \nonumber\\ 
\implies \tilde{\tilde{\bm{y}}}_k   &=& \bm{C}  \bm{x}_{k} + \tilde{\tilde{\bm{v}}}_k 
\end{eqnarray}
where the observation noise from sensors belonging to $\mathcal{A}$ becomes  $\tilde{\tilde{\bm{v}}}_{k,\mathcal{A}} \doteq \bm{v}_{k,\mathcal{A}} + \bm{T}_\mathcal{A}^{-1} \bm{b}_k$, 
$\tilde{\tilde{\bm{y}}}_k \doteq \begin{bmatrix}
           \tilde{\tilde{\bm{y}}}_{k,\mathcal{S}} \\
            \tilde{\tilde{\bm{y}}}_{k,\mathcal{A}}
         \end{bmatrix}$, $\tilde{\tilde{\bm{v}}}_k \doteq  \begin{bmatrix}
           \tilde{\tilde{\bm{v}}}_{k,\mathcal{S}} \\
            \tilde{\tilde{\bm{v}}}_{k,\mathcal{A}}
         \end{bmatrix}$, and the covariance matrix of  $\tilde{\tilde{\bm{v}}}_k$ becomes 
$\tilde{\tilde{\bm{R}}} \doteq \begin{bmatrix}
           \bm{R}_{\mathcal{S}} \,\,\,\,\,\,\,\,\,\,\,\,\,\,\,\,\,\,\,\,\,\,\,\,\,\,\,\,\,\,\,\,\,\,\,\,\, \,\,\,\,\,\,\,\,\,\, \bm{0} \\
            \bm{0} \,\,\,\,\,\,\,\,\, \bm{R}_\mathcal{A} + \bm{T}_\mathcal{A}^{-1}\bm{\Sigma}_b\bm{T}_\mathcal{A}'^{-1}
         \end{bmatrix}$.         
         
It is to be noted that the estimator can use this model only for $k \geq t$, when it knows that event $\mathrm{a}$ is true, i.e., an attack has already happened. Basically, for $k \geq t$, the estimator can compute $\tilde{\bm{x}}_{k | \mathrm{a}}$ by using a standard Kalman filter under this new observation model. In this connection, we would also like to point out the connection between the innovation for this modified observation model and the innovation for the original observation model, both under attack:

\begin{eqnarray}\label{eqn:equivalent-innovation-in-terms-of-modified-innovation}
    \tilde{\tilde{\bm{z}}}_{k,\mathcal{S}} &=& \bm{y}_{k,\mathcal{S}} - \bm{C}_\mathcal{S}\bm{A}\tilde{\bm{x}}_{k-1|\mathrm{a}} \nonumber \\
    &=& \bm{y}_{k,\mathcal{S}} - \bm{C}_\mathcal{S}\bm{A}\tilde{\bm{x}}_{k-1} + \bm{C}_\mathcal{S}\bm{A}(\tilde{\bm{x}}_{k-1} - \tilde{\bm{x}}_{k-1|\mathrm{a}}) \nonumber \\
    &=& \tilde{\bm{z}}_{k,\mathcal{S}}+ \bm{C}_\mathcal{S}\bm{A}(\tilde{\bm{x}}_{k-1} - \tilde{\bm{x}}_{k-1|\mathrm{a}}) \nonumber\\ 
    \tilde{\tilde{\bm{z}}}_{k,\mathcal{A}} &=& \tilde{\tilde{\bm{y}}}_{k,\mathcal{A}} -  \bm{C}_\mathcal{A}\bm{A}\tilde{\bm{x}}_{k-1|\mathrm{a}} \nonumber \\
    &=& \bm{T}_\mathcal{A}^{-1}\tilde{\bm{z}}_{k,\mathcal{A}} + \bm{C}_\mathcal{A}\bm{A}(\tilde{\bm{x}}_{k-1} - \tilde{\bm{x}}_{k-1|\mathrm{a}})
\end{eqnarray}
Now, $\tilde{\bm{x}}_{k|\mathrm{a}}$ can be computed via using standard Kalman filter equations:
\begin{eqnarray}
    \tilde{\tilde{\bm{P}}}_{k|k-1} &=& \bm{A} \tilde{\tilde{\bm{P}}}_{k-1} \bm{A}'+\bm{Q} \nonumber\\
    \tilde{\tilde{\bm{K}}}_{k} &=& \tilde{\tilde{\bm{P}}}_{k | k-1} \bm{C}' (\bm{C} \tilde{\tilde{\bm{P}}}_{k|k-1} \bm{C}' +\tilde{\tilde{\bm{R}}})^{-1} \nonumber\\
    \tilde{\bm{x}}_{k|\mathrm{a}} &=& \bm{A}\tilde{\bm{x}}_{k-1|\mathrm{a}} + \tilde{\tilde{\bm{K}}}_{k}\begin{bmatrix}
           \tilde{\tilde{\bm{z}}}_{k,\mathcal{S}} \\
            \tilde{\tilde{\bm{z}}}_{k,\mathcal{A}}
         \end{bmatrix} \nonumber \\
    \tilde{\tilde{\bm{P}}}_{k} &=& (\bm{I}-\tilde{\tilde{\bm{K}}}_{k} \bm{C}) \tilde{\tilde{\bm{P}}}_{k|k-1} 
\end{eqnarray}

Note that, since this modified observation model makes sense only for $k \geq t$, the distribution of  $\tilde{\tilde{\bm{z}}}_k$ for $k\geq t$   depends on $t$.

\subsection{Computing the conditional  distribution of $\tilde{\bm{z}}_k$}  \label{distribution_calc}
From \eqref{eqn:equivalent-innovation-in-terms-of-modified-innovation}, we can write :
\begin{eqnarray}
    \tilde{\bm{z}}_{k,\mathcal{S}} &=& \tilde{\tilde{\bm{z}}}_{k,\mathcal{S}} + \bm{C}_\mathcal{S}\bm{A}(\tilde{\bm{x}}_{k-1|a} - \tilde{\bm{x}}_{k-1})
\end{eqnarray}
The innovation sequences  $\tilde{\tilde{\bm{z}}}_{k,\mathcal{S}}$ and $\tilde{\tilde{\bm{z}}}_{k,\mathcal{A}}$ are   Gaussian noise with   zero mean and covariance $\bm{C}_\mathcal{S}\tilde{\tilde{\bm{P}}}_{k|k-1}\bm{C}_\mathcal{S}' + \tilde{\tilde{\bm{R}}}_\mathcal{S}$ and $\bm{C}_\mathcal{A}\tilde{\tilde{\bm{P}}}_{k|k-1}\bm{C}_\mathcal{A}' + \tilde{\tilde{\bm{R}}}_\mathcal{A}$ respectively. Hence, the conditional  distribution of $\tilde{\bm{z}}_{k,\mathcal{S}}$ will be $ N(\bm{C}_\mathcal{S}\bm{A}(\tilde{\bm{x}}_{k-1|a} - \tilde{\bm{x}}_{k-1}),\bm{C}_\mathcal{S}\tilde{\tilde{\bm{P}}}_{k|k-1}\bm{C}_\mathcal{S}' + \tilde{\tilde{\bm{R}}}_\mathcal{S})$ given all necessary quantities involved, where $\tilde{\tilde{\bm{R}}}_\mathcal{S} = \bm{R}_\mathcal{S}$. Using \eqref{eqn:equivalent-innovation-in-terms-of-modified-innovation}, conditional distribution of $\tilde{\bm{z}}_{k,\mathcal{A}}$ is   $ N(\bm{T}_\mathcal{A}\bm{C}_\mathcal{A}\bm{A}(\tilde{\bm{x}}_{k-1|a} - \tilde{\bm{x}}_{k-1}),\bm{T}_\mathcal{A}(\bm{C}_\mathcal{A}\tilde{\tilde{\bm{P}}}_{k|k-1}\bm{C}_\mathcal{A}' + \tilde{\tilde{\bm{R}}}_\mathcal{A})\bm{T}_\mathcal{A}')$, where $\tilde{\tilde{\bm{R}}}_\mathcal{A} = \bm{R}_\mathcal{A} + \bm{T}_\mathcal{A}^{-1}\bm{\Sigma}_b\bm{T}_\mathcal{A}'^{-1}$. Under stability, $\tilde{\tilde{\bm{P}}}_{k|k-1}$ will converge to a limit  $\tilde{\tilde{\bm{P}}}$.

\section{Attack detection  for known linear attack: Bayesian setting}\label{section:known-attack-scheme}
In Section~\ref{section:recovering}, we have shown that the distribution of the innovation changes under a linear attack. In this section, we provide a quickest change  detection algorithm for detecting a linear attack with known $\bm{T}$ matrix. Motivated by the theory of \cite{poor2008quickest}, we formulate the problem as a partially observable Markov decision process (POMDP, see \cite[Chapter~$5$]{bertsekas07dynamic-programming-optimal-control-1}) and derive the optimal attack detection policy.

\subsection{Problem statement}
We assume that  the discrete time starts at $k=0$, and the attack begins at a random time~$t$ which is modeled as a geometrically distributed random variable with mean $\frac{1}{\theta}$, where $\theta \in (0,1)$ is known to the remote estimator. Clearly, $\theta$ denotes the probability that, given that no attack has started up to time~$k$, the attacker launches an attack at time~$(k+1)$. This assumption allows us to formulate the quickest attack detection problem as a POMDP.

The two hypotheses considered here are the following:

\noindent $H_0$: there is no attack.\\
$H_1$: there is an attack.

At each $k \geq 0$, the remote estimator computes $\bm{z}_k$ (or $\tilde{\bm{z}}_k$ provided that there is an attack). At some (possibly random) time~$\tau$, the estimator decides to stop  collecting observations $\bm{y}_k$ (or $\tilde{\bm{y}}_k$ provided that there is an attack), and declares that an attack has been launched on $\mathcal{A}$. This stopping time $\tau$ is determined by a policy (a sequence of decision rules)  $\bm{\mu}=\{\mu_k\}_{k \geq 0}$, where $\mu_k$ is a function that takes the history of observations available to the estimator at time~$k$ and decides whether to declare that an attack has been launched.  Clearly, $\tau < t$ denotes the event of a false alarm. 

We seek to minimize the expected delay in detecting an attack subject to a constraint on the false alarm probability: 
\begin{eqnarray}
    \min_{_{\bm{\mu}}} \mathbb{E}_{\bm{\mu}}[(\tau-t)^{+}]  
    \mbox{\,\,\, s.t. \,\,\,} \mathbb{P}_{\bm{\mu}} (\tau < t) \leq \alpha \label{eqn:constrained-problem}
\end{eqnarray}
This constrained problem can be relaxed using a Lagrange multiplier $\lambda>0$ to obtain the following unconstrained problem:
\begin{eqnarray}\label{eqn:unconstrained-problem}
    L(\bm{\mu}) = \mathbb{E}_{\bm{\mu}}[(\tau-t)^+ + \lambda \bm{1}_{\{\tau<t\}}]
\end{eqnarray}
The following standard results tells us how to choose $\lambda$:
\begin{theorem}\label{theorem:finding-optimal-lambda}
Let us consider \eqref{eqn:constrained-problem} and its relaxed version \eqref{eqn:unconstrained-problem}. If there exists a $\lambda^* \geq 0$ and  a policy $\bm{\mu}^*(\lambda^*)$ such that, (i)  $\bm{\mu}^*(\lambda^*)$ is an optimal policy for \eqref{eqn:constrained-problem} under $\lambda^*$, and (ii)  the constraint in \eqref{eqn:unconstrained-problem} is met with equality under $\bm{\mu}^*(\lambda^*)$, then $\bm{\mu}^*(\lambda^*)$ is an optimal policy for the constrained problem~\eqref{eqn:constrained-problem}.
\end{theorem}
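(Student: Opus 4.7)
The plan is to use the standard Lagrangian sufficiency argument. Note that in condition (i), the natural reading is that $\bm{\mu}^*(\lambda^*)$ minimizes the unconstrained Lagrangian $L(\bm{\mu})$ defined in \eqref{eqn:unconstrained-problem} for the chosen $\lambda^*$; I will treat the hypothesis that way.

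First I would show that $\bm{\mu}^*(\lambda^*)$ is feasible for \eqref{eqn:constrained-problem}. This is immediate from condition (ii): the constraint $\mathbb{P}_{\bm{\mu}^*}(\tau^* < t) = \alpha \leq \alpha$ is satisfied (taking the constraint of \eqref{eqn:unconstrained-problem} in the statement to refer to the false-alarm constraint inherited from \eqref{eqn:constrained-problem}).

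Next, for any other feasible policy $\bm{\mu}$ (so that $\mathbb{P}_{\bm{\mu}}(\tau < t) \leq \alpha$), I would chain the following inequalities. By the Lagrangian optimality of $\bm{\mu}^*(\lambda^*)$,
\begin{equation*}
\mathbb{E}_{\bm{\mu}^*}[(\tau^*-t)^+] + \lambda^* \mathbb{P}_{\bm{\mu}^*}(\tau^* < t) \;\leq\; \mathbb{E}_{\bm{\mu}}[(\tau-t)^+] + \lambda^* \mathbb{P}_{\bm{\mu}}(\tau < t).
\end{equation*}
Substituting $\mathbb{P}_{\bm{\mu}^*}(\tau^* < t) = \alpha$ on the left via condition (ii), and upper-bounding the right side by using $\lambda^* \geq 0$ together with $\mathbb{P}_{\bm{\mu}}(\tau < t) \leq \alpha$, the $\lambda^* \alpha$ terms cancel and I obtain
\begin{equation*}
\mathbb{E}_{\bm{\mu}^*}[(\tau^* - t)^+] \;\leq\; \mathbb{E}_{\bm{\mu}}[(\tau - t)^+].
\end{equation*}
Since this holds for every feasible $\bm{\mu}$, the policy $\bm{\mu}^*(\lambda^*)$ is optimal for the constrained problem \eqref{eqn:constrained-problem}.

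There is no real obstacle in the argument; it is a textbook application of Lagrangian duality and closely mirrors the classical constrained-MDP result (e.g., \cite{bertsekas07dynamic-programming-optimal-control-1}). The only subtlety is the mild wording ambiguity in condition (i): the argument requires that $\bm{\mu}^*(\lambda^*)$ minimize the Lagrangian $L(\cdot)$, not be an optimizer of \eqref{eqn:constrained-problem} itself. The role of condition (ii), the equality in the constraint, is precisely to make the complementary slackness term vanish so that the Lagrangian inequality transfers to the original objective. Existence of such a $\lambda^*$ is not required by the theorem (it is assumed), so I would not attempt to prove existence here; rather, its actual determination would be deferred to the algorithmic treatment of \eqref{eqn:unconstrained-problem}.
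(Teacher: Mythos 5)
Your proof is correct: it is the standard Lagrangian sufficiency (complementary slackness) argument, and the paper itself offers no proof of this theorem, simply labeling it a ``standard result,'' so your argument supplies exactly the reasoning the paper implicitly invokes. You also correctly resolved the two typographical slips in the statement --- condition (i) should read ``optimal for \eqref{eqn:unconstrained-problem} under $\lambda^*$'' and condition (ii) refers to the false-alarm constraint of \eqref{eqn:constrained-problem} --- and the chain of inequalities using $\mathbb{E}[\bm{1}_{\{\tau<t\}}]=\mathbb{P}(\tau<t)$, the equality $\mathbb{P}_{\bm{\mu}^*}(\tau^*<t)=\alpha$, and $\lambda^*\geq 0$ is exactly right.
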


\subsection{POMDP formulation} \label{Section:POMDP-formulation}
Let us define the belief probability of the estimator that an attack has been launched at or before time~$k$, as $\pi_k = \mathbb{P}(t\leq k|\tilde{\bm{z}}_1,\tilde{\bm{z}}_2,\ldots,\tilde{\bm{z}}_k)$, where  $\tilde{\bm{z}}_k=\bm{z}_k$ if $t>k$. Under this notation, \eqref{eqn:unconstrained-problem} can be rewritten as:

\footnotesize
\begin{eqnarray}
    L(\bm{\mu}) 
    =\mathbb{E}_{\bm{\mu}}[\sum_{k=0}^{\tau-1}\bm{1}_{(t\leq k)} + \lambda\bm{1}_{\{\tau<t\}}] 
    =\mathbb{E}_{\bm{\mu}}[\sum_{k=0}^{\tau-1}\pi_k + \lambda(1-\pi_\tau)] \label{eqn:delay-to-sum} 
\end{eqnarray}
\normalsize

For change detection, typically $\pi_k$ is a sufficient statistic \cite{l1983theory} for decision-making at time~$k$, but that does not hold in our problem due to temporal dependence of the innovation after an attack is  launched. Hence, we formulate a  POMDP with state at time~$k$  given by $(\pi_k,\tilde{\bm{z}}_1, \tilde{\bm{z}}_2,\ldots,\tilde{\bm{z}}_{k})$, with the understanding that  $\tilde{\bm{z}}_{k}=\bm{z}_k$ if $t>k$. The set of possible control actions is given by $\mathcal{U}=\{0,1\}$, where action~$0$ represents continuing to collect observations, and action~$1$ stands for stopping and declaring $H_1$.

Further, based on \eqref{eqn:delay-to-sum}, the single stage cost at time~$k$ is:
\[   
c_k(\pi_k,\tilde{\bm{z}}_1, \tilde{\bm{z}}_2,\ldots,\tilde{\bm{z}}_{k},u_k) = 
     \begin{cases}
       \pi_k &\quad\text{if }u_k=0\\
       \lambda(1-\pi_k) &\quad\text{if }u_k=1 \\ 
     \end{cases}
\]

\subsection{Recursive calculation of $\pi_k$} \label{subsection:recursiveCalculationOfPi}
We need to compute $\pi_k$ from $\pi_{k-1}$ recursively, in order to be able to calculate state transitions. However, after attack,  $\tilde{\bm{z}}_{k}$ sequence ceases to be i.i.d. across $k$, which makes this recursive calculation non-trivial. 
Note that, the joint probability density of the innovations computed at the estimator:
\begin{eqnarray}
    &&p(\tilde{\bm{z}}_1,\ldots,\tilde{\bm{z}}_k) \nonumber\\
    &=&p(\tilde{\bm{z}}_1)  p(\tilde{\bm{z}}_2|\tilde{\bm{z}}_1) \ldots p(\tilde{\bm{z}}_k|\tilde{\bm{z}}_1,\ldots,\tilde{\bm{z}}_{k-1})  \nonumber
\end{eqnarray}
Let us define  $p_c(\tilde{\bm{z}}_i) \doteq p(\tilde{\bm{z}}_i|\mathcal{H}_{i-1})$. Now, using Bayes rule, we can write:

\footnotesize
\begin{eqnarray}
    \pi_k &=& \frac{p(\tilde{\bm{z}}_1,\ldots,\tilde{\bm{z}}_{k}|t\leq k)\times \mathbb{P}(t\leq k)}{p(\tilde{\bm{z}}_1, \ldots,\tilde{\bm{z}}_k)} \nonumber \\
    &=& \frac{ \mathbb{P}(t\leq k)\Pi_{i=1}^{k}p_c(\tilde{\bm{z}}_i|t\leq k)}{\mathbb{P}(t\leq k)\Pi_{i=1}^{k}p_c(\tilde{\bm{z}}_i|t\leq k)+\mathbb{P}(t> k)\Pi_{i=1}^{k}p_c(\tilde{\bm{z}}_i|t> k)} \nonumber\\
    &=& \frac{\beta_k}{\beta_k+1} \label{eqn:pi_k-in-terms-of-beta_k}
\end{eqnarray}
\normalsize

where,

\footnotesize
\begin{eqnarray}
    \beta_k &=& \frac{\mathbb{P}(t\leq k)\Pi_{i=1}^{k}p_c(\tilde{\bm{z}}_i|t\leq k)}{\mathbb{P}(t> k)\Pi_{i=1}^{k}p_c(\tilde{\bm{z}}_i|t>k)}\nonumber \\
    &=&f_k(\theta)\frac{\mathbb{P}(t\leq k-1)p_c(\tilde{\bm{z}}_k|t\leq k)\Pi_{i=1}^{k-1}p_c(\tilde{\bm{z}}_i|t\leq k)}{\mathbb{P}(t> k-1)p_c(\tilde{\bm{z}}_k|t> k)\Pi_{i=1}^{k-1}p_c(\tilde{\bm{z}}_i|t>k-1)} \nonumber\\
    \label{eqn:BetaExpanded}
\end{eqnarray}
\normalsize

and, 
\begin{equation}
    f_k(\theta)=\frac{1-(1-\theta)^{k}}{(1-\theta)(1-(1-\theta)^{k-1})} \label{EqnTheta}
\end{equation}
Now, note that:
\begin{eqnarray}
    &&\Pi_{i=1}^{k-1}p_c(\tilde{\bm{z}}_i|t\leq k)\nonumber\\
    &=& p(\tilde{\bm{z}}_1,\ldots,\tilde{\bm{z}}_{k-1}|t\leq k) \nonumber\\
    &=& p(\tilde{\bm{z}}_1,\ldots,\tilde{\bm{z}}_{k-1}|t\leq k-1)\mathbb{P}(t\leq k-1|t\leq k) \nonumber\\
    && + p(\tilde{\bm{z}}_1,\ldots,\tilde{\bm{z}}_{k-1}|t= k)\mathbb{P}(t=k|t\leq k) \nonumber\\
    &=&  \frac{1-(1-\theta)^{k-1}}{1-(1-\theta)^k}  p(\tilde{\bm{z}}_1,\ldots,\tilde{\bm{z}}_{k-1}|t\leq k-1) \nonumber\\
    && +  \frac{\theta(1-\theta)^{k-1}}{1-(1-\theta)^k}  \Pi_{i=1}^{k-1}{p( \bm{z}_i)} \label{EqnKtoK-1} \label{eqn:recursive-calculation-of-the-numerator-of-beta_k}
\end{eqnarray}
where the product form in the last line comes from the fact that innovations are i.i.d. before the attack is launched. Clearly,  \eqref{eqn:recursive-calculation-of-the-numerator-of-beta_k} allows us to calculate the numerator of \eqref{eqn:BetaExpanded}. Similarly, in order to calculate the denominator of \eqref{eqn:BetaExpanded}, we note that:
\begin{equation}
    \Pi_{i=1}^{k-1}p_c(\tilde{\bm{z}}_i|t>k-1)=\Pi_{i=1}^{k-1} p(\bm{z}_i) \label{eqn:simplifying-denominator-of-beta_k}
\end{equation}
This again holds because innovations are i.i.d. before the attack is launched. 

Using \eqref{eqn:recursive-calculation-of-the-numerator-of-beta_k} and \eqref{eqn:simplifying-denominator-of-beta_k} in \eqref{eqn:BetaExpanded} and upon simplification, we obtain:

\footnotesize
\begin{eqnarray}
    \beta_k
    &=& \frac{p_c(\tilde{\bm{z}}_k|t\leq k)}{p_c(\tilde{\bm{z}}_k|t> k)} \bigg(\frac{\beta_{k-1}}{(1-\theta)}+\frac{\theta}{(1-\theta)} \bigg) \nonumber\\
     &=& \frac{p_c(\tilde{\bm{z}}_k|t\leq k)}{p_c(\tilde{\bm{z}}_k|t> k)} \bigg(\frac{\pi_{k-1}}{(1-\theta)(1-\pi_{k-1})}+\frac{\theta}{(1-\theta)} \bigg) \label{eqn:beta-update}
\end{eqnarray}
\normalsize

Obviously, $\pi_k$ can be calculated from $\beta_k$ using \eqref{eqn:pi_k-in-terms-of-beta_k}.

It is important to note that, using the results from Section~\ref{distribution_calc},  $p_c(\tilde{\bm{z}}_k|t\leq k)$ can be computed as:
\footnotesize
\begin{equation}
    p_c(\tilde{\bm{z}}_k|t\leq k)=\sum_{i=0}^{k}{p_c(\tilde{\bm{z}}_k|t=i)}\mathbb{P}_c(t=i|t\leq k) \\
\end{equation}
\begin{eqnarray}
    \mathbb{P}_c(t=i|t\leq k)&=&\frac{p(\tilde{\bm{z}}_1,\ldots,\tilde{\bm{z}}_{k-1}|t=i)\mathbb{P}(t=i|t\leq k)}{\substack{
    k\\
    \sum \\
    j=0}{p(\tilde{\bm{z}}_1,\ldots,\tilde{\bm{z}}_{k-1}|t=j) \mathbb{P}(t=j|t\leq k)}} \nonumber \\
    p(\tilde{\bm{z}}_1,\ldots,\tilde{\bm{z}}_{k-1}|t=i)&=&\Pi_{j=1}^{k-1}p_c(\tilde{\bm{z}}_j|t=i) \nonumber
\end{eqnarray}
\normalsize

Here $p_c(\tilde{\bm{z}}_k|t=i)$ is basically the distribution of $\tilde{\bm{z}}_k$ which has already been calculated in Section~\ref{distribution_calc} assuming that  $t=i$. However, for each $i \in \{0,1,\ldots,k\}$, we need to run a separate Kalman filter to calculate $p_c(\tilde{\bm{z}}_k|t=i)$, as described in \ref{distribution_calc}.   On the other hand, $p_c(\tilde{\bm{z}}_k|t> k)$ is simply the unconditional distribution of $\bm{z}_k$ under no attack, since the innovations are independent of each other under no attack. Hence, results from Section~\ref{distribution_calc} can be  directly used to calculate $\beta_k$ and hence $\pi_k$ recursively.

\subsection{Bellman equation, value function and policy structure}
Since the state transition is dependent on observation history, the optimal policy will be non-stationary, and the optimal value function will also be dependent on time.  Let $J_k^*(\pi_k,\tilde{\bm{z}}_1, \tilde{\bm{z}}_2,\ldots,\tilde{\bm{z}}_{k})$ denote the optimal cost-to-go starting from a state $(\pi_k,\tilde{\bm{z}}_1, \tilde{\bm{z}}_2,\ldots,\tilde{\bm{z}}_{k})$ at time~$k$, with $J^*(\pi_0) \doteq J_0^*(\pi_0)$. Also, let $\Psi_k$ be a function such that, given $u_k=0$, we have $\pi_{k+1}=\Psi_k(\pi_k, \tilde{\bm{z}}_1, \tilde{\bm{z}}_2, \cdots, \tilde{\bm{z}}_k)$. Hence,

The Bellman equation for \eqref{eqn:unconstrained-problem} is given by:
\begin{eqnarray}
   && J^{*}_k(\pi_k,\tilde{\bm{z}}_1, \tilde{\bm{z}}_2,\ldots,\tilde{\bm{z}}_{k}) \nonumber\\
   &=& \min\{\lambda(1-\pi_k),\pi_k+ \nonumber\\
   && \mathbb{E}[J^{*}_{k+1}(\Psi_k(\pi_k,\tilde{\bm{z}}_1, \tilde{\bm{z}}_2,\ldots,\tilde{\bm{z}}_{k}), \tilde{\bm{z}}_1, \tilde{\bm{z}}_2,\ldots,\tilde{\bm{z}}_{k+1} )| \tilde{\bm{z}}_{1:k}]\} \nonumber\\
   \label{eqn::Bellman-equation}
\end{eqnarray}
 The first term in the minimization of \eqref{eqn::Bellman-equation} is the cost  of stopping and declaring $H_1$, and the second term is the cost of continuing observation, which involves a single-stage cost $\pi_k$ and an expected cost-to-go from the next step where the expectation is taken over the distribution of $\tilde{\bm{z}}_{k+1}$ conditioned on $\tilde{\bm{z}}_{1:k} \doteq (\tilde{\bm{z}}_1, \ldots,\tilde{\bm{z}}_{k})$.

Let us consider an $N$-horizon problem which is same as problem~\eqref{eqn:unconstrained-problem} except that, at time~$N$, the detector must stop and declare $H_1$. Let the analogues of $J_k^*$ (for various values of $k$) and $J^*$ for this $N$-horizon problem be denotes by $J_k^{(N)*}$ and $J^{(N)*}$, respectively.

\begin{lemma}\label{lemma:truncated-value-function-concave-in-pi}
$J^{(N)*}(\pi)$ is concave in $\pi \in [0,1]$.
\end{lemma}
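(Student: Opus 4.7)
The plan is to express $J^{(N)*}(\pi)$ as a pointwise infimum, over admissible stopping rules, of a family of functions each of which is affine in $\pi$; concavity then follows immediately because the pointwise infimum of any family of affine functions is concave.

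To carry this out, I would first identify any admissible policy for the $N$-horizon problem with a stopping time $\tau \in \{0, 1, \ldots, N\}$ that is adapted to the filtration generated by $\{\tilde{\bm{z}}_k\}$. This identification is legitimate because, by the recursion developed in Section~\ref{subsection:recursiveCalculationOfPi}, $\pi_k$ is a deterministic function of the prior $\pi$ and $\tilde{\bm{z}}_{1:k}$; hence for each fixed $\pi$, any rule defined on the augmented state $(\pi_k, \tilde{\bm{z}}_{1:k})$ collapses to an observation-adapted rule. Writing $L^{(N)}(\tau, \pi) \doteq \mathbb{E}_\pi \big[ \sum_{k=0}^{\tau-1} \bm{1}_{\{t \leq k\}} + \lambda \bm{1}_{\{\tau < t\}} \big]$ for the resulting cost, I let $\mathbb{P}_\pi$ be the mixture prior that places mass $\pi$ on $\{t \leq 0\}$ (attack already in progress at $k = 0$) and mass $(1-\pi)\theta(1-\theta)^{j-1}$ on $\{t = j\}$ for each $j \geq 1$, so that $\pi$ parametrises the initial belief that the attack has already been launched.

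Conditioning on $t$ then gives $L^{(N)}(\tau, \pi) = \sum_j \mathbb{P}_\pi(t = j) \, G_\tau(j)$ with $G_\tau(j) \doteq \mathbb{E} \big[ (\tau - j)^+ + \lambda \bm{1}_{\{\tau < j\}} \,\big|\, t = j \big]$. The crux is that, given $t = j$, the joint law of $\{\tilde{\bm{z}}_k\}$ is fully specified by the system and attack model (i.i.d.\ pre-attack innovations for $k < j$ and the post-attack distribution characterised in Section~\ref{section:recovering} for $k \geq j$) and does not depend on $\pi$; since $\tau$ is measurable with respect to the observations alone, $G_\tau(j)$ is likewise $\pi$-free. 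Because $\mathbb{P}_\pi(t = j)$ is affine in $\pi$ for every $j$ and the sum is uniformly bounded by $N + \lambda$ under the finite-horizon truncation, $L^{(N)}(\tau, \pi)$ is affine in $\pi \in [0,1]$, whence $J^{(N)*}(\pi) = \inf_\tau L^{(N)}(\tau, \pi)$ is concave.

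The main obstacle I expect is the first step: rigorously justifying that the POMDP value function, whose state carries the entire observation history, actually coincides with this infimum over observation-adapted stopping rules. This reduction rests on $\pi_k$ being a deterministic functional of $(\pi, \tilde{\bm{z}}_{1:k})$ together with the single-stage cost in the Bellman recursion depending on the augmented state only through $\pi_k$; given these ingredients, the reduction is standard. The remaining manipulations, namely the interchange of sum and expectation and the verification of the affinity of $\mathbb{P}_\pi(t = j)$ in $\pi$, are routine in the presence of the finite-horizon truncation.
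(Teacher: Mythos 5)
Your proof is correct, but it takes a genuinely different route from the paper. The paper proceeds by backward induction on the finite-horizon Bellman recursion: it writes the (inductively assumed concave) $J_k^{(N)*}$ as an infimum of affine functions $\inf_{(x,y)\in V}(x\pi+y)$, multiplies by the predictive density $p_c(\tilde{\bm{z}}_k)$ so that the belief update $\pi_{next}$ composed with an affine map becomes affine in $\pi$, and then uses the facts that integration preserves concavity and that the minimum of two concave functions is concave. You instead use the classical ``Bayes risk is affine in the prior'' argument: identify policies with observation-adapted stopping rules (legitimate here precisely because $\pi_k$ is a deterministic functional of $(\pi,\tilde{\bm{z}}_{1:k})$, so the class of admissible observation-adapted rules does not vary with $\pi$), condition on the change point $t$, note that $G_\tau(j)$ is $\pi$-free while $\mathbb{P}_\pi(t=j)$ is affine in $\pi$, and conclude that each $L^{(N)}(\tau,\cdot)$ is affine, so the infimum is concave. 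Your approach is shorter, avoids the induction and the explicit form of the belief recursion entirely, and is the standard device in Bayesian quickest detection; the paper's inductive approach has the side benefit of establishing concavity of every intermediate $\mathbb{E}[J_k^{(N)*}(\pi_{next})]$ in the current belief, which is what is actually invoked later in the proof of the threshold structure (Theorem~\ref{theorem:threshold-policy-structure}), whereas your argument as written only delivers concavity at $k=0$ (though it extends to each stage by applying the same reasoning to the conditional problem given $\tilde{\bm{z}}_{1:k}$). Two minor points to tidy up: the atom carrying mass $\pi$ should be pinned to a definite attack-start convention (e.g., $t=0$) so that the post-attack observation law under that atom is well defined and $\pi$-free, since Section~\ref{section:recovering} makes the post-attack innovation distribution depend on the exact initiation instant; and it is worth one sentence noting that randomized rules need not be considered because their cost is a mixture of affine functions and hence still affine in $\pi$.
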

\begin{proof}
See Appendix~\ref{appendix:proof-of-truncated-value-function-concave-in-pi}. Proof is based on outline from \cite{premkumar2008Sleepwake}.
\end{proof}

\begin{theorem}
 $J^{*}(\pi)$ is concave in $\pi$.
\end{theorem}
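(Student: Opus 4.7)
The plan is to obtain $J^*(\pi)$ as the pointwise limit of the finite-horizon value functions $J^{(N)*}(\pi)$ as $N \to \infty$, and then invoke Lemma~\ref{lemma:truncated-value-function-concave-in-pi} together with the elementary fact that pointwise limits of concave functions on a convex domain are concave.

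First I would verify that $J^{(N)*}(\pi) \downarrow J^*(\pi)$ for every $\pi \in [0,1]$. The sequence is non-increasing in $N$ because any feasible $N$-horizon policy is also feasible for the $(N+1)$-horizon problem, and $J^{(N)*}(\pi) \geq J^*(\pi)$ because any stopping rule with $\tau \leq N$ is trivially a feasible policy for the infinite-horizon problem. For the matching upper bound, note first that $J^*(\pi) \leq \lambda(1-\pi) < \infty$, since stopping at time $0$ is always feasible. Given $\epsilon > 0$, I would pick an infinite-horizon policy $\bm{\mu}^\epsilon$ with stopping time $\tau^\epsilon$ such that $L(\bm{\mu}^\epsilon) \leq J^*(\pi) + \epsilon$; finiteness of $L(\bm{\mu}^\epsilon)$ forces $\tau^\epsilon < \infty$ almost surely, since otherwise either the delay term $(\tau^\epsilon - t)^+$ or the false-alarm term $\lambda \bm{1}_{\{\tau^\epsilon < t\}}$ would blow up. Construct a truncated policy $\bm{\mu}^{\epsilon,N}$ that follows $\bm{\mu}^\epsilon$ up to time $N-1$ and declares $H_1$ at time $N$ if it has not already stopped; its stopping time is $\tau^\epsilon \wedge N$. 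On the almost sure event $\{\tau^\epsilon < \infty\}$ we have $(\tau^\epsilon \wedge N - t)^+ \uparrow (\tau^\epsilon - t)^+$ and $\bm{1}_{\{\tau^\epsilon \wedge N < t\}} \to \bm{1}_{\{\tau^\epsilon < t\}}$, so monotone convergence applied to the delay and bounded convergence applied to the false-alarm indicator yield $L(\bm{\mu}^{\epsilon,N}) \to L(\bm{\mu}^\epsilon)$ as $N \to \infty$. Since $\bm{\mu}^{\epsilon,N}$ is feasible for the $N$-horizon problem, $J^{(N)*}(\pi) \leq L(\bm{\mu}^{\epsilon,N})$, and taking $N \to \infty$ followed by $\epsilon \downarrow 0$ yields $\lim_N J^{(N)*}(\pi) \leq J^*(\pi)$.

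With pointwise convergence in hand, Lemma~\ref{lemma:truncated-value-function-concave-in-pi} supplies concavity of each $J^{(N)*}$ on $[0,1]$, and concavity is preserved under pointwise limits on a convex domain. Hence $J^*$ is concave. The only delicate step is the truncation argument underlying $J^{(N)*}(\pi) \to J^*(\pi)$, specifically justifying $\tau^\epsilon < \infty$ a.s.\ from $J^*(\pi) < \infty$ and the interchange of limit and expectation via monotone/bounded convergence; the remainder of the proof is a mechanical invocation of the preceding lemma and the stability of concavity under pointwise limits.
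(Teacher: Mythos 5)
Your proposal is correct and follows essentially the same route as the paper: the paper's proof is the one-line observation that $J^*$ is the limit of the concave finite-horizon value functions $J^{(N)*}$ from Lemma~\ref{lemma:truncated-value-function-concave-in-pi}, and concavity is preserved under pointwise limits. The only difference is that you explicitly justify the pointwise convergence $J^{(N)*}(\pi)\to J^*(\pi)$ (monotonicity plus truncation of a near-optimal policy with monotone/bounded convergence), a step the paper takes for granted; your argument for it is sound.
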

\begin{proof}
The proof follows from Lemma~\ref{lemma:truncated-value-function-concave-in-pi} and the fact that limit of a sequence of concave functions is concave.
\end{proof} 

The next theorem describes the optimal policy for the unconstrained problem~\eqref{eqn:unconstrained-problem}.
\begin{theorem}\label{theorem:threshold-policy-structure}
 The optimal  policy for the constrained  problem~\eqref{eqn:unconstrained-problem} is a threshold policy. At time~$k$, if the state is $(\pi_k,\tilde{\bm{z}}_1, \tilde{\bm{z}}_2,\ldots,\tilde{\bm{z}}_{k})$, then the optimal action is to stop and declare $H_1$ if $\pi_k>\Gamma_{k}(\tilde{\bm{z}}_{1:k})$ for a threshold $\Gamma_{k}(\tilde{\bm{z}}_{1:k}) \in [0,1]$, and to continue collecting observations if $\pi_k<\Gamma_{k}(\tilde{\bm{z}}_{1:k})$. If $\pi_k=\Gamma_{k}(\tilde{\bm{z}}_{1:k})$, either action is optimal.
\end{theorem}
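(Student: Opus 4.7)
The plan is to establish that $J^*_k(\pi, \tilde{\bm{z}}_{1:k})$ is concave in its first argument for each fixed history $\tilde{\bm{z}}_{1:k}$, and then read off the threshold structure of the stopping region directly from this concavity together with the affine form of the stopping cost $s(\pi) \doteq \lambda(1-\pi)$.

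For the concavity step, rather than attempting an induction through the Bellman recursion (which is awkward here because the post-attack innovations are not i.i.d.\ across $k$), I would work with the original optimization formulation. Fix a history $\tilde{\bm{z}}_{1:k}$ and treat the initial belief $\pi$ as a free parameter. For any (history-dependent) policy $\mu$ from time $k$ onward, the marginal distribution of the attack-initiation time $t$ given $\tilde{\bm{z}}_{1:k}$ decomposes linearly in $\pi$: $\mathbb{P}(t=j\mid \tilde{\bm{z}}_{1:k}) = \pi\, P_1(j)$ for $j \le k$ and $\mathbb{P}(t=j\mid \tilde{\bm{z}}_{1:k}) = (1-\pi)\theta(1-\theta)^{j-k-1}$ for $j > k$, where $P_1$ is a history-dependent probability mass function on $\{1,\ldots,k\}$ (computed from the Kalman filters of Section~\ref{distribution_calc}). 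Conditioned on $t=j$, the joint distribution of all future observations and hence the expected remaining cost $\mathbb{E}_\mu[(\tau-t)^+ + \lambda \mathbf{1}_{\tau<t} \mid t=j, \tilde{\bm{z}}_{1:k}]$ is a $\pi$-independent constant. Summing against the $\pi$-affine marginal of $t$ yields an expected cost of the form $\pi\, a_\mu(\tilde{\bm{z}}_{1:k}) + (1-\pi)\, b_\mu(\tilde{\bm{z}}_{1:k})$, affine in $\pi$. Taking the infimum over $\mu$ realizes $J^*_k(\pi, \tilde{\bm{z}}_{1:k})$ as an infimum of affine functions in $\pi$, which is therefore concave.

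For the threshold step, define $f(\pi) \doteq J^*_k(\pi, \tilde{\bm{z}}_{1:k}) - s(\pi)$ on $[0,1]$. Since stopping is always available in the Bellman minimization~\eqref{eqn::Bellman-equation}, $J^*_k \le s$ pointwise, so $f \le 0$. At $\pi = 1$, the continuing cost equals $1 + \mathbb{E}[J^*_{k+1}(\cdot) \mid \tilde{\bm{z}}_{1:k}] \ge 1$ while $s(1) = 0$, so $J^*_k(1, \tilde{\bm{z}}_{1:k}) = 0 = s(1)$ and $f(1) = 0$. Hence $f$ is a concave function on $[0,1]$ attaining its maximum value $0$ at $\pi = 1$. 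Its argmax set $\{\pi : f(\pi) = 0\}$, which is precisely the stopping region, is a convex subset of $[0,1]$ containing $1$; it therefore takes the form $[\Gamma_k(\tilde{\bm{z}}_{1:k}),\, 1]$ for some $\Gamma_k(\tilde{\bm{z}}_{1:k}) \in [0,1]$. On the complementary region $[0, \Gamma_k)$ one has $J^*_k < s$, so continuing is the strictly optimal action; at $\pi = \Gamma_k$, both actions attain the minimum and either is optimal.

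The main obstacle is the concavity step, because the temporal dependence of post-attack innovations and the history-dependent nature of the problem make a direct backward-induction argument delicate. The trick that bypasses this difficulty is the observation above: once $\tilde{\bm{z}}_{1:k}$ is fixed and $\pi$ is treated as a free initial-belief parameter, all remaining randomness (the future observations conditioned on $t$ and the past) becomes $\pi$-independent, so the entire $\pi$-dependence collapses into the affine marginal of $t$. This reduces the concavity claim to the elementary fact that an infimum of affine functions is concave, after which the threshold conclusion is an immediate consequence of the convexity of the argmax set of a concave function.
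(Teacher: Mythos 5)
Your proof is correct, but it reaches the threshold structure by a genuinely different route than the paper. For concavity, the paper proves Lemma~\ref{lemma:truncated-value-function-concave-in-pi} by backward induction on a finite-horizon truncation, re-expressing the (inductively concave) $J_k^{(N)*}$ as an infimum of affine functions inside each Bellman step and then passing to the limit $N\to\infty$; you instead apply the infimum-of-affine-functions device once, at the level of the original stopping problem, by freezing the history, restricting without loss to history-dependent policies, and observing that the $\pi$-dependence of the expected cost collapses into the affine marginal of $t$. This is cleaner and sidesteps the induction entirely, though it silently relies on the standard fact that history-dependent policies suffice and on the state $(\pi,\tilde{\bm{z}}_{1:k})$ being interpreted with $t$ distributed as $\pi P_1(\cdot)$ on $\{1,\dots,k\}$ and geometrically beyond $k$, consistent with the paper's transition kernel. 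For the threshold itself, the paper compares the stopping cost $g$ and continuation cost $h_k$ at the endpoints --- using Jensen's inequality and concavity of $J^*_{k+1}$ to get $h_k(0,\cdot)\leq J^*_{k+1}(\theta)\leq\lambda(1-\theta)<g(0)$ and $h_k(1,\cdot)\geq 1>g(1)$ --- and then invokes the intermediate value theorem plus concavity of $h_k-g$ to get a unique interior crossing; you instead note that $f=J^*_k-s$ is concave, nonpositive, and zero at $\pi=1$, so its superlevel set $\{f=0\}$ is a convex set containing $1$, i.e.\ an interval $[\Gamma_k,1]$. Your version is shorter, but the paper's endpoint analysis buys strictly more: it shows $\Gamma_k\in(0,1)$ (so the detector never stops at belief zero) and that the crossing is unique, whereas your argument only yields $\Gamma_k\in[0,1]$ and needs the small additional remark that a concave function is continuous on $(0,1)$ to guarantee $f(\Gamma_k)=0$ (so that ``either action is optimal'' at the threshold) when $\Gamma_k$ is interior.
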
 
\begin{proof}
See Appendix~\ref{appendix:proof-of-threshold-policy-structure}.
\end{proof}

\subsection{Computational complexity and the QUICKDET algorithm}
Due to the time-inhomogeneous state transition, problem~\eqref{eqn:unconstrained-problem} cannot be solved by standard techniques such as value iteration. On the other hand, the optimal threshold $\Gamma_{k}(\tilde{\bm{z}}_{1:k})$ at time $k$  depends not only on $k$ but also the history of innovations $\tilde{\bm{z}}_{1:k}$. Further, the presence of  $J_{k+1}^*$ in the Bellman equation~\eqref{eqn::Bellman-equation}  makes even a fairly discretized version of the problem computationally very heavy to solve in an online fashion. These problems can be alleviated by setting a constant threshold, i.e., $\Gamma_{k}(\tilde{\bm{z}}_{1:k})=\Gamma$ for all $k \geq 1, \tilde{\bm{z}}_{1:k}$. Definitely, this will yield a suboptimal solution, but as will be seen later, we can achieve much better performance than the competing algorithms in the literature by carefully choosing $\Gamma$.

\subsubsection{The QUICKDET algorithm}
Motivated by Theorem~\ref{theorem:threshold-policy-structure} along with the need for a constant threshold $\Gamma$ to reduce computational complexity, here we describe outline of an algorithm QUICKDET. The two key components of QUICKDET, apart from the threshold structure, are the choices of the optimal $\Gamma^*$ to minimize the objective in the unconstrained problem~\eqref{eqn:unconstrained-problem} within the class of stationary threshold policies, and  $\lambda^*$ to meet the constraint in \eqref{eqn:constrained-problem} with equality as per Theorem~\ref{theorem:finding-optimal-lambda}.  These parameters are set via some off-line pre-computation involving two timescale stochastic approximation \cite{borkar08stochastic-approximation-book}, wherein  we update $\lambda$ in the slower timescale, and $\Gamma$ in the faster timescale. 

In the pre-computation phase, we generate various sample paths $\mathcal{P}_0, \mathcal{P}_1, \cdots$ of the process, observation and attack dynamics as per our discussed system model. The iterations start with some initial iterates $\Gamma(0)$ and $\lambda(0)$. For sample path~$\mathcal{P}_n$, a threshold policy with constant threshold $\Gamma(n)$ is used for attack detection on the state and modified observation sequence, and it is observed whether this policy generates a false alarm for sample path~$\mathcal{P}_n$; if it does not generate a false alarm, then the detection delay $\tau_n$ is recorded. Let $\mathbbm{1}_{FA}(n)$ and $\mathbbm{1}_{D}(n)$ be the indicators of false alarm and attack detection for sample path~$\mathcal{P}_n$. 

Let $\{a(n)\}_{n \geq 0}$, $\{b(n)\}_{n \geq 0}$ and $\{\delta(n)\}_{n \geq 0}$ be three non-negative sequences satisfying the following properties: (i) $\sum_{n=0}^{\infty} a(n)=\sum_{n=0}^{\infty} b(n)=\infty$, (ii)  $\sum_{n=0}^{\infty} a^2(n)< \infty, \sum_{n=0}^{\infty} b^2(n)<\infty$, (iii) $\lim_{n \rightarrow \infty}\frac{b(n)}{a(n)}=0$, (iv) $\lim_{n \rightarrow \infty} \delta(n)=0$, and (v) $\sum_{n=0}^{\infty} \frac{a^2(n)}{\delta^2(n)}<\infty$. The first two conditions are standard requirements for stochastic approximation. Condition~(iii) ensures the necessary timescale separation. The last two conditions are required for the convergence of the $\Gamma(n)$ update via stochastic gradient descent (SGD), adapted from the theory of simultaneous perturbation stochastic approximation (SPSA, see \cite{spall92original-SPSA}).

Let $\Gamma^+(n)=\Gamma(n) +\delta(n)$ and $\Gamma^-(n)=\Gamma(n) -\delta(n)$ be two perturbations of $\Gamma(n)$ in opposite directions. Let $d_n^+=\tau_n \mathbbm{1}_{D}(n)+\lambda(n) \mathbbm{1}_{FA}(n)|_{\Gamma^+(n)}$ be the cost incurred along sample path~$\mathcal{P}_n$ if a  threshold policy with a constant threshold $\Gamma^+(n)$ is used along sample path $\mathcal{P}_n$; let us define $d_n^-$ is a similar way.

The following updates are made: 
\begin{eqnarray}\label{eqn:policy-pre-computation}
\Gamma(n+1)&=&[\Gamma(n)-a(n)\times\frac{d_n^+ -d_n^-}{2\delta(n)}]_0^1 \nonumber \\
    \lambda(n+1)&=& [\lambda(n)+b(n)\times(\mathbbm{1}_{FA}(n)-\alpha)]_0^{\infty} 
\end{eqnarray}
The $\Gamma(n)$ update in \eqref{eqn:policy-pre-computation} is a stochastic gradient descent algorithm used to minimize the expected cost along sample path $\mathcal{P}_n$, for $\lambda=\lambda(n)$. This algorithm runs in a faster timescale. The $\lambda(n)$ update runs at a slower timescale to ensure that the false alarm probability equals $\alpha$. The faster timescale iterate $\Gamma(n)$ views the $\lambda(n)$ iterate as quasi-static, while the $\lambda(n)$ iterate views the $\Gamma(n)$ iterate as almost equilibriated. The iterates are projected onto desired intervals to ensure boundedness.

Using standard arguments, we can prove that $(\Gamma(n), \lambda(n)) \rightarrow \{(\Gamma, \lambda) \in [0,1] \times [0,\infty): \mathbb{E}(\mathbbm{1}_{FA})=\alpha, \nabla_{\Gamma}  \mathbb{E}(d)|_{\lambda}=0 \}$. In practice, if $(\Gamma(n), \lambda(n)) \rightarrow (\Gamma^*, \lambda^*)$, then this $\Gamma^*$ is used in the real attack detector on field.

\subsection{Extension to multiple attack matrices}
\subsubsection{Calculation of transition probabilities}
Here we assume that the attack matrix $\bm{T} $ is unknown but belongs to a known finite set  $\mathcal{T}$, and that $\bm{I} \in \mathcal{T}$. We assume an initial prior distribution on $\{ \bm{T} \in \mathcal{T}: \bm{T} \neq \bm{I} \}$.  The  detector maintains a belief probability $\pi_k^{\bm{T}}$ for hypothesis  $\bm{T}$, and the total number of hypotheses is $|\mathcal{T}|$. In this case, the posterior belief  probability of attack becomes $\pi_k=1-\pi_k^{\bm{I}}$. Here,  $\pi_k^{\bm{T}}$ for $\bm{T} \neq \bm{I}$ is defined as the belief at time~$k$ that the attack has already begun, and that the attacker uses $\bm{T}$:

\begin{eqnarray}
   \pi_k^{\bm{T}} & \doteq & p(t\leq k, \bm{T} | \tilde{\bm{z}}_1, \ldots, \tilde{\bm{z}}_k) \nonumber \\
   &=& \underbrace{p(t\leq k | \tilde{\bm{z}}_1, \ldots, \tilde{\bm{z}}_k, \bm{T})}_{\doteq \pi_{k|\bm{T}}} p(\bm{T} | \tilde{\bm{z}}_1, \ldots, \tilde{\bm{z}}_k) \nonumber
\end{eqnarray}
where $\pi_{k|\bm{T}} = \frac{\beta_{k|{\bm{T}}}}{\beta_{k|{\bm{T}}} + 1}$ (similar to \eqref{eqn:pi_k-in-terms-of-beta_k}) is the conditional belief on attack having already started  given that the attacker uses $\bm{T}$, and this $\beta_{k|{\bm{T}}}$ can be calculated in the same way as in Section~\ref{subsection:recursiveCalculationOfPi} for each $\bm{T}\in\mathcal{T}$. Posterior probability distribution over $\mathcal{T}$, i.e.,  $p(\bm{T}|\tilde{\bm{z}_1},\ldots,\tilde{\bm{z}_k}, t \leq k)$ for $\bm{T} \neq \bm{I}$, can be calculated as given below:

\footnotesize
\begin{eqnarray}
    && p(\bm{T}|\tilde{\bm{z}_1},\ldots,\tilde{\bm{z}_k}) \nonumber\\
    &=& \frac{p(\tilde{\bm{z}}_k|\tilde{\bm{z}}_1, \ldots, \tilde{\bm{z}}_{k-1}, \bm{T}) p(\bm{T}|\tilde{\bm{z}}_1,\ldots,\tilde{\bm{z}}_{k-1})}{\sum_{\bm{T}' \in \mathcal{\bm{T}}}p(\tilde{\bm{z}}_k|\tilde{\bm{z}}_1, \ldots, \tilde{\bm{z}}_{k-1}, \bm{T}') p(\bm{T}'|\tilde{\bm{z}}_1,\ldots,\tilde{\bm{z}}_{k-1})} \nonumber \\
    &=& \frac{p_c(\tilde{\bm{z}}_k|\bm{T})p(\bm{T}|\tilde{\bm{z}}_1,\ldots,\tilde{\bm{z}}_{k-1})}{\sum_{\bm{T}' \in \mathcal{\bm{T}}} p_c(\tilde{\bm{z}}_k|\bm{T}')p(\bm{T}'|\tilde{\bm{z}}_1,\ldots,\tilde{\bm{z}}_{k-1})} \nonumber
\end{eqnarray}
\normalsize

Here $p(\bm{T}|\tilde{\bm{z}}_1,\ldots,\tilde{\bm{z}}_{k-1})$ is known through the iterative calculation, where the first iteration will depend on prior distribution on attack matrices $\bm{T} \in \mathcal{T}$. The probability $p_c(\tilde{\bm{z}}_k| \bm{T})$ is the distribution of innovation given the history and attack matrix $\bm{T}$: 
\begin{eqnarray}
    p_c(\tilde{\bm{z}}_k| \bm{T}) &=& p_c(\tilde{\bm{z}}_k|\bm{T},t\leq k)p(t\leq k|\tilde{\bm{z}}_1,\ldots,\tilde{\bm{z}}_{k-1}, \bm{T}) + \nonumber \\
    & & p_c(\tilde{\bm{z}}_k|\bm{T},t > k)p(t > k|\tilde{\bm{z}}_1,\ldots,\tilde{\bm{z}}_{k-1},\bm{T}) \nonumber \\
    &=&  p_c(\tilde{\bm{z}}_k|\bm{T},t\leq k)(\pi_{k-1|\bm{T}} + (1-\pi_{k-1|\bm{T}})\theta) + \nonumber \\
    & &  p_c(\tilde{\bm{z}}_k|\bm{T},t > k)(1-\pi_{k-1|\bm{T}})(1-\theta) \label{eqn:m-ary-posterior-distribution}
\end{eqnarray}
Now, by using \eqref{eqn:beta-update}, we can write:
\begin{eqnarray}
    \beta_{k|{\bm{T}}} = \frac{p_c(\tilde{\bm{z}}_k|\bm{T}, t \leq k)(\pi_{k-1|{\bm{T}}} + (1-\pi_{k-1| {\bm{T}}})\theta)}{p_c(\tilde{\bm{z}}_k|\bm{T}, t > k) (1-\theta)(1-\pi_{k-1| {\bm{T}}})} \nonumber
\end{eqnarray}
Hence, 

\small
\begin{equation}
\begin{split}
    \beta_{k|{\bm{T}}} p_c(&\tilde{\bm{z}}_k|\bm{T}, t > k) (1-\theta)(1-\pi_{k-1|{\bm{T}}}) \\
    &\quad = p_c(\tilde{\bm{z}}_k|\bm{T}, t \leq k)(\pi_{k-1|{\bm{T}}} + (1-\pi_{k-1| {\bm{T}}})\theta) \label{eqn:beta-substitute-calc}
\end{split}
\end{equation}
\normalsize

Finally, \eqref{eqn:m-ary-posterior-distribution} can be simplified by substituting  \eqref{eqn:beta-substitute-calc} resulting the following expression:

\footnotesize
\begin{eqnarray}
     p_c(\tilde{\bm{z}}_k| \bm{T}) &=& p_c(\tilde{\bm{z}}_k|\bm{T},t > k)(1-\pi_{k-1| {\bm{T}}})(1+\beta_{k|{\bm{T}}})(1-\theta) \nonumber
\end{eqnarray}
\normalsize

Now, $p_c(\tilde{\bm{z}}_k|\bm{T},t > k)$ becomes the  unconditional distribution of the innovation under no attack. On the other hand, $\pi_{k-1|\bm{T}}$ is known from the previous iteration, and $\beta_{k|\bm{T}}$ can be calculated as  in Section~\ref{subsection:recursiveCalculationOfPi}. Thus, we can calculate $p_c(\tilde{\bm{z}}_k| \bm{T})$ and consequently $p(\bm{T}|\tilde{\bm{z}_1},\ldots,\tilde{\bm{z}_k})$ and $\pi_{k|\bm{T}}$ for all $\bm{T} \in \mathcal{T}$. 

However, the time complexity of algorithm in multiple attack matrix case  increases  $|\mathcal{T}|$ times compared to the binary hypothesis testing case, since we would need to maintain separate set of Kalman filters for each attack matrix $\bm{T}$.

\subsubsection{Policy structure}
For     \eqref{eqn:constrained-problem}, we can still compute the belief probability that the attack has begun:  
\begin{eqnarray}
   \pi_k &=& p(t\leq k|\tilde{\bm{z}}_1,\tilde{\bm{z}}_2,\ldots,\tilde{\bm{z}}_k) \nonumber \\
   &=& \sum_{\bm{T} \in \mathcal{T}}p(t\leq k, \bm{T}|\tilde{\bm{z}}_1,\tilde{\bm{z}}_2,\ldots,\tilde{\bm{z}}_k) \nonumber \\
   &=& \sum_{\bm{T} \in \mathcal{T}}\pi_k^{\bm{T}} \nonumber
\end{eqnarray}
Similar to Theorem~\ref{theorem:threshold-policy-structure}, the optimal Policy structure for multiple attack matrices  will still be a threshold policy. 
We also note that,  $\underset{\bm{T} \in \mathcal{T}}{\arg\max}  \pi_{k|\bm{T}}$ yields the maximum likelihood (ML) detection of $\bm{T}$, and  $ \underset{\bm{T} \in \mathcal{T}}{\arg\max}  \pi_k^{\bm{T}}$ can be used for maximum aposteriori (MAP) detection of the attacker's strategy.

\section{Attack detection in the non-Bayesian setting}
\label{section:quickdet-non-Bayesian}
In this section, we consider the situation where the distribution of the attack initiation instant $t$ is not known. This eliminates the possibility of solving the problem via MDP formulation.  Hence, we  use the popular generalised CUSUM algorithm proposed by Lai \cite{737522} for change detection, and demonstrate how  the results in Section~\ref{distribution_calc} can be used in computing the generalised CUSUM statistic.

  The basic CUSUM Algorithm was developed heuristically by Page \cite{10.1093/biomet/41.1-2.100}, but was later analysed rigorously in \cite{10.1214/aoms/1177693055}, \cite{10.1214/aos/1176350164}, \cite{10.1214/aos/1176347761} and \cite{737522}.  In our non-Bayesian FDI detection setting, under no attack (i.e., $t=\infty$), the false alarm rate (FAR)  is the inverse of the mean time to false alarm $FAR(\tau) = \frac{1}{\mathbb{E}_\infty(\tau)}$. 
With little abuse of notation where the stopping time  $\tau$ also represents a stopping rule for the detector, we focus on the following set of detection/stopping rules:
 \begin{equation}
     D_{\alpha} = \{\tau : FAR(\tau) \leq \alpha\} \label{eqn:stopping-Time-set}
 \end{equation}
Since finding a uniformly powerful test that minimizes detection delay over $D_{\alpha}$ is not possible, we study two important minimax formulations developed by Lorden \cite{10.1214/aoms/1177693055} and Pollak \cite{10.1214/aos/1176346587}.

 {\em Lordan's Problem:} For a given $\alpha$, find   $\tau \in D_{\alpha}$ to minimize worst average detection delay $WADD(\tau)$ defined as:
 \begin{equation}
     WADD(\tau) = sup_{n\geq 1} ess sup \mathbb{E}_{n}[(\tau - n)^{+} | \tilde{\bm{z}}_1, \ldots, \tilde{\bm{z}}_{n-1}] \label{eqn:WADD}
 \end{equation}
 
 {\em Pollak's Problem:} For a given $\alpha$, find   $\tau \in D_{\alpha}$ to minimize cumulative average detection delay $CADD(\tau)$ defined as :
 \begin{equation}
     CADD(\tau) = sup_{n\geq 1} \mathbb{E}_{n}[\tau - n | \tau \geq n] \label{eqn:CADD}
 \end{equation}
 It was proved in \cite[Section~IV]{veeravalli2012quickest} that $WADD(\tau) \geq CADD(\tau)$. Lai \cite{737522} studied both Lorden's and Pollak's problem in non-i.i.d. setting, and showed that under some additional condition, the basic CUSUM algorithm is asymptotically optimal as $\alpha \to  0$.

 Let us define $p_c(\tilde{\bm{z}_i} | t = k) $ (adapted to our problem setting where innovations are i.i.d. across time before attack),  the associated log-likelihood ratio, and the corresponding running sum of the log-likelihood ratios as follows:
 \begin{eqnarray}
 p_c(\tilde{\bm{z}}_i | t = k)  &\doteq & p(\tilde{\bm{z}}_i | \tilde{\bm{z}}_0.\cdots, \tilde{\bm{z}}_{i-1}, t = k)\nonumber\\
     L_{i,k} & \doteq & log \bigg(\frac{p_c(\tilde{\bm{z}}_i | t = k)}{p(\tilde{\bm{z}}_i | t =\infty)} \bigg) \nonumber\\
     S_n &\doteq & max_{1\leq k \leq n} \underbrace{\sum_{i=k}^{n} L_{i,k}}_{\doteq S_{n,k}}  
 \end{eqnarray}

   \begin{figure}[t!]
\begin{centering}
\begin{center}
\includegraphics[height=6.5cm, width=0.92\linewidth]{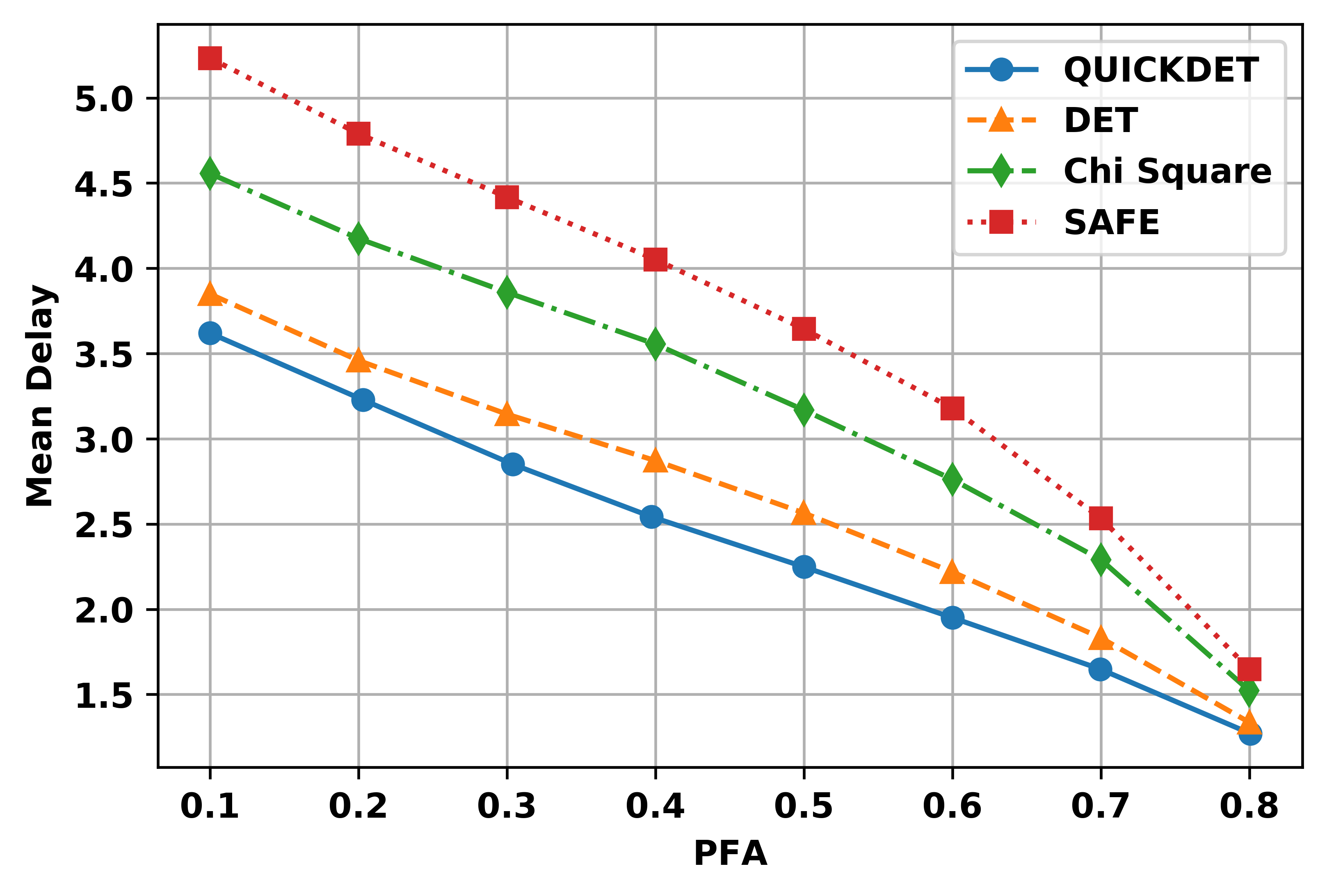}
\end{center}
\end{centering}
\vspace{-5mm}
\caption{Mean delay versus probability of false alarm (PFA)  comparison among QUICKDET and three other  detectors.}
\label{fig:Detection-Comparison}
\vspace{-2mm}
\end{figure}

 Here $S_{n,k}$ represents the cumulative sum of likelihood ratio up to time~$n$ assuming that the attack started at $t=k$. Obviously, $S_{n,k}$ can be written recursively as follows:
 \begin{equation}
     S_{n,k} = \max \{0, S_{n-1, k} + log \bigg( \frac{p_c(\tilde{\bm{z}}_n | t=k)}{p(\tilde{\bm{z}}_n | t=\infty)} \bigg) \} \nonumber
 \end{equation}
\begin{remark}
 It is important to note that, $p_c(\tilde{\bm{z}}_i| t=k)$ for $i \geq k$ can be computed by using the theory developed in Section~\ref{distribution_calc}.
\end{remark}

The generalized CUSUM algorithm   for  non i.i.d. observations involve the following stopping time: 
 \begin{equation}
     \tau_g = \inf \{ n \geq 1 : S_n \geq b\}
 \end{equation}
  for some threshold $b$.  It was proved in \cite{737522} that, as  $\alpha \to 0$, under some regularity conditions, 
  \begin{eqnarray}
      && \mathbb{E}_\infty[\tau_g] \geq e^b \nonumber\\
     && CADD(\tau_g) \leq WADD(\tau_g) \leq \frac{b}{I}(1+o(1)) \text{  as   }b \to \infty \nonumber
  \end{eqnarray}
  for a constant $I$. Hence, If we set $b=|log\alpha|$, then
  \begin{eqnarray}
      FAR(\tau_g) &=& \frac{1}{\mathbb{E}_\infty [\tau_g]} \leq  \alpha \nonumber\\
      WADD(\tau_g) &\leq & \frac{|log\alpha|}{I}(1+o(1))
  \end{eqnarray}
  Thus, $\tau_g$ is first-order asymptotically optimal detection rule  within   $D_\alpha$.

\begin{figure}[t!]
\begin{centering}
\begin{center}
\includegraphics[height=6.5cm, width=0.92\linewidth]{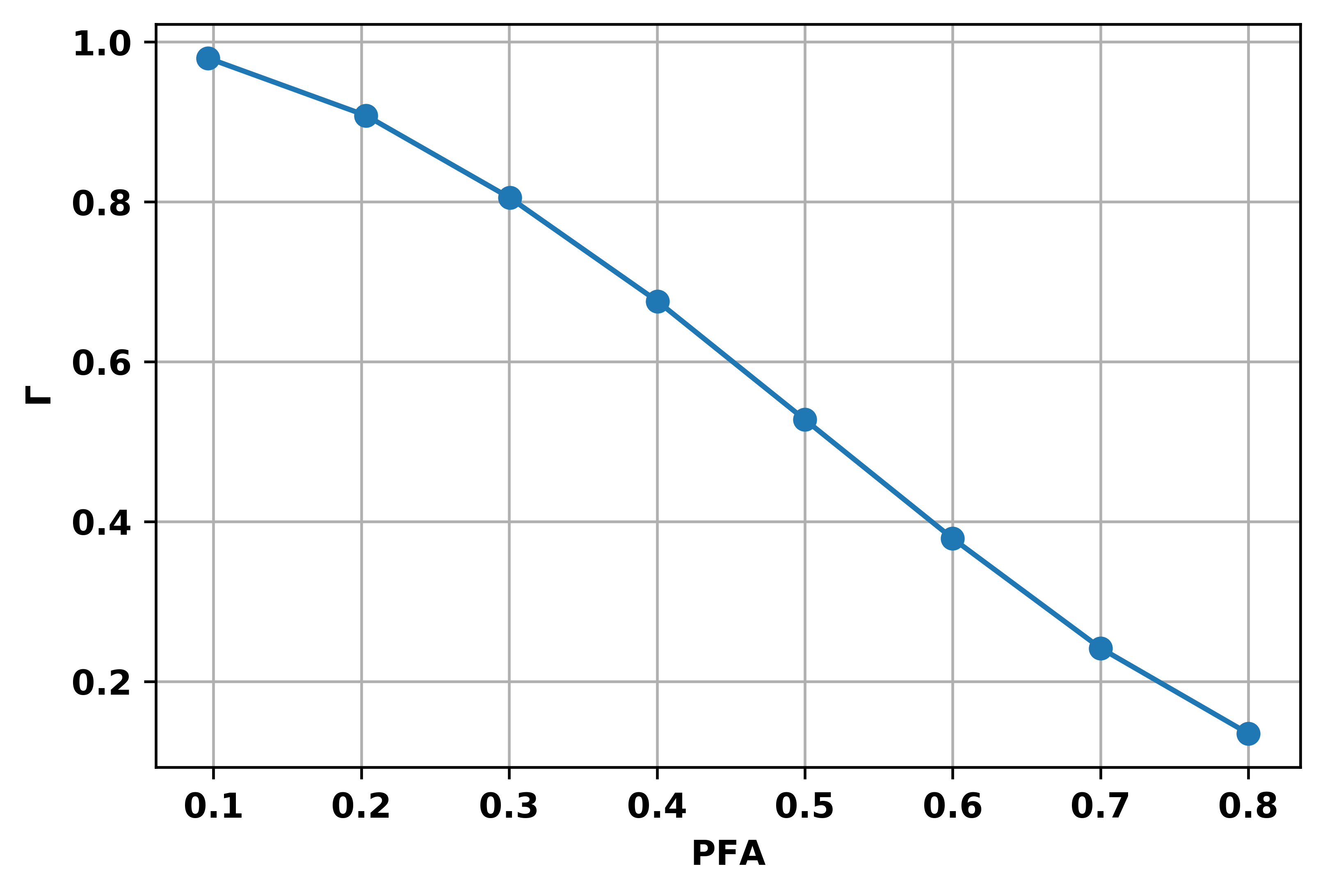}
\end{center}
\end{centering}
\vspace{-5mm}
\caption{Variation of threshold $\Gamma$ with probability of false alarm (PFA)  for   QUICKDET.}
\label{fig:threshold-vs-PFA}
\vspace{-5mm}
\end{figure}

\section{Numerical results}\label{section:numerical-results}
Since multiple sensor observations can be viewed as parts of a single imaginary sensor's observation, we assume that there is one safe sensor and one unsafe sensor, and the sign of the innovation coming from the unsafe sensor is inverted, i.e., $\bm{T}_{\mathcal{A}}=-\bm{I}$. 
In this section, we first compare the performance of QUICKDET against three other detectors:
\begin{itemize}
    \item {\bf $\chi^2$ detector:} This detector is as described in Section~\ref{section:system-model}, except that $\eta$ is optimized in an off-line pre-computation phase (as in QUICKDET) to meet the false alarm constraint with equality: 
    $$\eta(n+1)= [\eta(n)+a(n)\times(\mathbbm{1}_{FA}(n)-\alpha)]_0^{\infty} $$
    The limit $\eta^*$ of this iteration is used in the real detector on field. We choose $J=3$ in our simulation.
    \item {\bf DET:}  This is an adaptation of the DET algorithm in  \cite{chattopadhyay2019security}. It requires the detector  to run two separate parallel Kalman filters for the safe and unsafe sensors.  Let $\hat{\bm{x}}_{k,\mathcal{S}}$ and $\hat{\bm{x}}_{k,\mathcal{A}}$ be the estimates declared by two blind Kalman filters using observations from the safe sensor and from the unsafe sensor, respectively. This detector declares an attack at time~$j$ if $\sum_{k=j-J+1}^j (\hat{\bm{x}}_{k,\mathcal{A}}-\hat{\bm{x}}_{k,\mathcal{S}})' \bm{\Sigma}^{-1} (\hat{\bm{x}}_{k,\mathcal{A}}-\hat{\bm{x}}_{k,\mathcal{S}}) > \eta$ where $\eta$ can be optimized as in the $\chi^2$~detector to meet the false alarm constraint with equality, and $\bm{\Sigma}$ is the steady state covariance matrix of $(\hat{\bm{x}}_{k,\mathcal{A}}-\hat{\bm{x}}_{k,\mathcal{S}})$ under no attack. We choose $J=3$ in our simulation.
    \item {\bf SAFE:} This is the detection algorithm taken from  \cite{li2017detection}, with the threshold optimized as before to meet the false alarm constraint.
\end{itemize}

\begin{figure}[t!]
    \begin{centering}
    \begin{center}
        \includegraphics[height=6.5cm, width=0.92\linewidth]{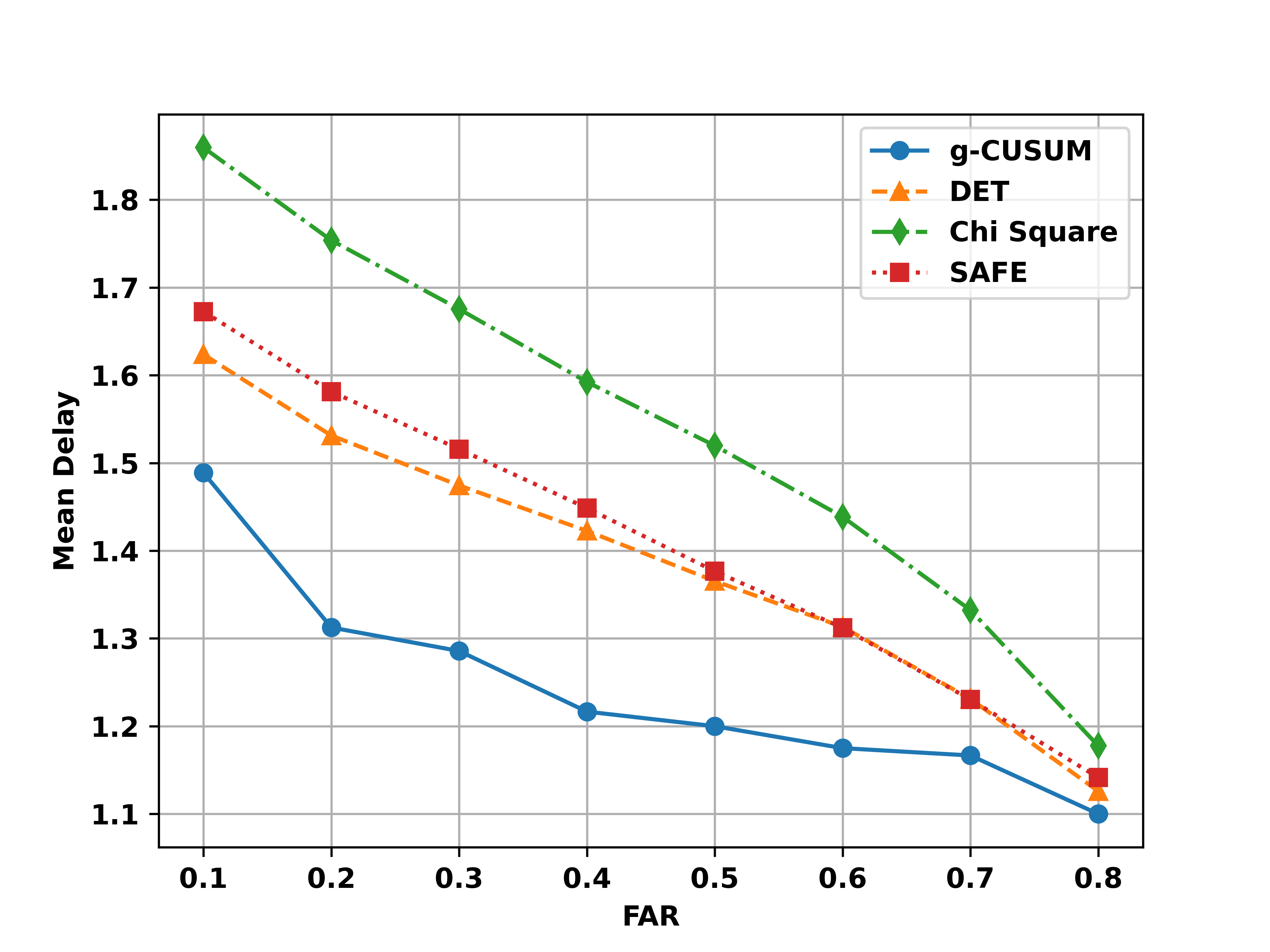}
    \end{center}
    \end{centering}
    \vspace{-5mm}
    \caption{Mean detection delay versus FAR performance comparison in the non-Bayesian setting.}
    \label{fig:CUSUM-Detection-Plot}
    \vspace{-5mm}
    \end{figure}

We consider a process with dimension $q=2$. The $\bm{A}$, $\bm{Q}$, $\bm{R}_{\mathcal{A}}$ and $\bm{R}_{\mathcal{S}}$ matrices are all chosen to be $\begin{bmatrix}
            1\,\,0  \\
            0\,\,1
         \end{bmatrix}$, $\bm{C}_{\mathcal{A}}=\begin{bmatrix}
            0.5\,\,0  \\
            0\,\,\,1
         \end{bmatrix}$ and $\bm{C}_{\mathcal{S}}=\begin{bmatrix}
            0\,\,0.5  \\
            1\,\,\,0
         \end{bmatrix}$ . The probability of launching an attack at a time, $\theta$, is chosen to be $0.05$, and we generate $10000$ sample paths for the pre-computation phase of QUICKDET. Prior belief probability of attack  $\pi_0$ is set to $0$ for all the sample paths. 

Figure~\ref{fig:Detection-Comparison} compares the mean detection delay of QUICKDET, DET, $\chi^2$ and SAFE detectors, for various values of false alarm probability $\alpha$. We observe that QUICKDET  outperforms all other detectors, and the performance margin is very high w.r.t. SAFE and $\chi^2$ detectors. It is important to note that, though QUICKDET is a suboptimal detector motivated by the optimal detector of Theorem~\ref{theorem:threshold-policy-structure}, it uses the knowledge of the matrix $\bm{T}_{\mathcal{A}}$ which the other three algorithms do not use. Hence, given the knowledge of $\bm{T}_{\mathcal{A}}$ (the attacker's strategy), it is always better to use QUICKDET. It is also observed that DET's performance is not far from that of QUICKDET, which means that DET can be used as a potential alternative to QUICKDET in case the knowledge of $\bm{T}_{\mathcal{A}}$ is not available apriori. These observations have been verified through numerous simulation experiments under various problem instances.

Since the probability of attack detection is $1$ in all of our simulations, we do not compare ROC curves of the detectors. However, detection delay comparison of Figure~\ref{fig:Detection-Comparison} is a reasonable alternative to the ROC plots.

Figure~\ref{fig:threshold-vs-PFA} shows that the threshold $\Gamma$ in QUICKDET decreases with the false alarm constraint $\alpha$, which supports the intuition that a larger $\Gamma$ results in less frequent up-crossing of $\Gamma$ by the belief probability, and consequently a smaller false alarm probability. 

Figure~\ref{fig:CUSUM-Detection-Plot} compares the performance of generalised CUSUM test with the other three algorithms in terms of mean detection delay versus FAR in the non-Bayesian detection setting. It is observed that here again generalised CUSUM (G-CUSUM) significantly outperforms the other three algorithms.

\section{Conclusion}\label{section:conclusion}
In this paper, we  provided an algorithm for quickest detection of FDI attack on remote estimation in the Bayesian and non-Bayesian setting. Theoretical proof of optimality was provided wherever required, and numerical results  demonstrated clear superiority of the proposed algorithms against other competing algorithms. However, there remain a number of open questions: (i) how to handle a nonlinear attack scheme? (ii) how to handle unknown attack strategy? (iii) how to perform distrbuted quickest detection in a multihop network setting? We plan to address these issues in our future research endeavours.

 \appendices 

 \section{Proof of Lemma~\ref{lemma:truncated-value-function-concave-in-pi}}
 \label{appendix:proof-of-truncated-value-function-concave-in-pi}
 
 Under action $u_{k-1}=0$ and for $\pi_{k-1}=\pi$, the recursion for $\pi$ can be written in a more elementary form as:
\begin{equation}
    \pi_{next}=\frac{p_c(\tilde{\bm{z}}_k|t\leq k)(\pi+(1-\pi)\theta)}{p(\tilde{\bm{z}}_k|\tilde{\bm{z}}_1,\tilde{\bm{z}}_2,\ldots,\tilde{\bm{z}}_{k-1})}
\end{equation}

Note that,  $J_N^{(N)*}(\pi)=\lambda(1-\pi)$ is concave in $\pi$. Now $J_{N-1}^{(N)*}(\pi)=\min\{\lambda(1-\pi),\pi+\mathbb{E}[J_N^{(N)*}(\pi_{next})]\}$ which is minimum of two linear functions and hence concave in $\pi$.

As induction hypothesis, let us assume that $J_k^{(N)*}(\pi)$ is  concave in $\pi$, and we would like to show that $J_{k-1}^{(N)*}(\pi)$ is concave in $\pi$. Since $J_{k-1}^{(N)*}(\pi)=\min \{\lambda(1-\pi), \mathbb{E}[J_k^{(N)*}(\pi_{next})]  \}$, it would suffice to show that $\mathbb{E}[J_k^{(N)*}(\pi_{next})]$ is concave in $\pi$.

Let us define a function:
\begin{equation*}
    \Phi_{k}(\pi,\tilde{\bm{z}}_{1:k})=J_k^{(N)*}\left(\frac{p_c(\tilde{\bm{z}}_k|t\leq k)(\pi(1-\theta)+\theta)}{p_c(\tilde{\bm{z}}_k)}\right)p_c(\tilde{\bm{z}}_k)
\end{equation*}
where $\tilde{\bm{z}}_{1:k}$ represents $\tilde{\bm{z}}_1,\tilde{\bm{z}}_2,\ldots,\tilde{\bm{z}}_{k}$. Since $\int\ldots d(\tilde{\bm{z}}_k)$ is a linear operator and $\mathbb{E}[J_k^{(N)*}(\pi_{next})]=\int\Phi_{k}(\pi,\tilde{\bm{z}}_{1:k})d(\tilde{\bm{z}}_k)$, it is sufficient to show that $\Phi_{k}(\pi,\tilde{\bm{z}}_{1:k})$ is concave in $\pi$. Since $J_k^{(N)*}(\pi)$ is assumed to be concave in $\pi$, we can write:
\begin{equation}
   J_k^{(N)*}(\pi)=\inf_{(x,y)\in V}{(x\pi+y)}
\end{equation}
where $V=\{(x,y)\in \mathbb{R}^2: x\pi+y \geq J_k^{(N)*}(\pi)\text{ }\forall\text{ } \pi\in[0,1]\}$. Hence,  

\footnotesize
\begin{eqnarray}
&& \Phi_{k}(\pi,\tilde{\bm{z}}_{1:k})\nonumber\\
    &=& J^{(N)*}_{k}\left(\frac{p_c(\tilde{\bm{z}}_k|t\leq k)(\pi(1-\theta)+\theta)}{p_c(\tilde{\bm{z}}_k)}\right)p_c(\tilde{\bm{z}}_k)\nonumber\\
    &=&\inf_{(x,y)\in V}\left(\frac{p_c(\tilde{\bm{z}}_k|t\leq k)(\pi(1-\theta)+\theta)}{p_c(\tilde{\bm{z}}_k)}x+y\right)p_c(\tilde{\bm{z}}_k) \nonumber\\
    &=&\inf_{(x,y)\in V}\{[p_c(\tilde{\bm{z}}_k|t\leq k)(1-\theta)x]\pi+x\theta p_c(\tilde{\bm{z}}_k|t\leq k)+yp_c(\tilde{\bm{z}}_k)\} \nonumber
\end{eqnarray}
\normalsize

Since infimum of linear functions is a concave function,  $\Phi_{k}(\pi,\tilde{\bm{z}}_{1:k})$ and hence $\mathbb{E}[\Phi_{k}(\pi,\tilde{\bm{z}}_{1:k})]$ are concave in $\pi$. Hence, $J_{k-1}^{(N)*}(\pi)$, which is minimum of two concave functions, is concave in $\pi$. Since, we have already proved that $J_N^{(N)*}(\pi)$ and $J_{N-1}^{(N)*}(\pi)$ is concave in $\pi$, by backward induction, we can claim that $J^{(N)*}(\pi)$ is concave in $\pi$.

\section{Proof of Theorem~\ref{theorem:threshold-policy-structure}}
\label{appendix:proof-of-threshold-policy-structure}
Let us define:

\footnotesize
\begin{eqnarray}
        g(\pi_k)&=&\lambda(1-\pi_k) \nonumber \\
      h_k(\pi_k, \tilde{\bm{z}}_1, \tilde{\bm{z}}_2, \cdots, \tilde{\bm{z}}_k) 
       &=&\pi_k+\mathbb{E}[J^*_{k+1}(\Psi_k(\pi_k, \tilde{\bm{z}}_1, \tilde{\bm{z}}_2, \cdots, \tilde{\bm{z}}_{k}))] \nonumber
\end{eqnarray}
\normalsize

Clearly $g(0)=\lambda$ and $g(1)=0$. Now note that,
\begin{eqnarray}
    && h_{k}(0, \tilde{\bm{z}}_1, \tilde{\bm{z}}_2, \cdots, \tilde{\bm{z}}_{k}) \nonumber\\
    &=&\mathbb{E}_{\tilde{\bm{z}}_{k+1}|\tilde{\bm{z}}_1\ldots\tilde{\bm{z}}_{k}}\left[J^*_{k+1}\left(\frac{\theta p_c(\tilde{\bm{z}}_{k+1}|t\leq k+1)}{p(\tilde{\bm{z}}_{k+1}|\tilde{\bm{z}}_{k}\ldots \tilde{\bm{z}}_{1})}\right)\right]\nonumber \\
    &\leq & J^*_{k+1}\left(\theta\times\mathbb{E}_{\tilde{\bm{z}}_{k+1}|\tilde{\bm{z}}_1\ldots\tilde{\bm{z}}_{k}}\left[\frac{ p_c(\tilde{\bm{z}}_{k+1}|t\leq k+1)}{p(\tilde{\bm{z}}_{k+1}|\tilde{\bm{z}}_{k}\ldots \tilde{\bm{z}}_{1})}\right]\right)\nonumber\\
    &=& J_{k+1}^*(\theta)\nonumber\\
    &\leq& \lambda(1-\theta)\nonumber\\
    &<& \lambda  \nonumber
\end{eqnarray}
where the first inequality follows from Jensen's inequality. Also, note that, $h_{k}(1, \tilde{\bm{z}}_1, \tilde{\bm{z}}_2, \cdots, \tilde{\bm{z}}_{k})=1+\mathbb{E}[J^*_{k+1}(\pi_{k+1})]\geq 1$ for any $k  \geq 1$. From the fact that $h_{k}(1, \tilde{\bm{z}}_1, \tilde{\bm{z}}_2, \cdots, \tilde{\bm{z}}_{k})-g(1)>0$ and $h_{k}(0, \tilde{\bm{z}}_1, \tilde{\bm{z}}_2, \cdots, \tilde{\bm{z}}_{k})-g(0)<0$ for all $k \geq 1$, by using the intermediate value theorem for continuous functions, we get that there exists $\Gamma_{k}(\tilde{\bm{z}}_{1:k})  \in (0,1)$ such that $h_k(\Gamma_{k}(\tilde{\bm{z}}_{1:k}), \tilde{\bm{z}}_{1:k} )=g(\Gamma_{k}(\tilde{\bm{z}}_{1:k}) )$. Further, since $h_k(\pi, \tilde{\bm{z}}_{1:k})-g(\pi)$ is a concave function, $h_k(1, \tilde{\bm{z}}_{1:k})-g(1)>0$ and $h_k(0, \tilde{\bm{z}}_{1:k})-g(0)<0$, the value of $\Gamma_{k}(\tilde{\bm{z}}_{1:k}) $ is unique for each $k \geq 1$. Hence the optimal stopping time $\tau^*$ is given by
\begin{equation}
    \tau^*=\inf\{k \geq 1:\pi_k \geq \Gamma_k (\tilde{\bm{z}}_{1:k}) \} \nonumber
\end{equation}
where $\Gamma_{k}(\tilde{\bm{z}}_{1:k}) $ is given by
\begin{eqnarray}
   && \Gamma_k (\tilde{\bm{z}}_{1:k}) +\mathbb{E}[J^*_{k+1}(\Psi_{k+1}(\Gamma_k (\tilde{\bm{z}}_{1:k}) , \tilde{\bm{z}}_1\ldots\tilde{\bm{z}}_k))] \nonumber\\
   &=& \lambda(1-\Gamma_k (\tilde{\bm{z}}_{1:k}) ) \nonumber
\end{eqnarray}

{\small
\bibliographystyle{unsrt}
\bibliography{arpan-techreport}

\begin{thebibliography}{10}

\bibitem{gupta2020quickest}
Akanshu Gupta, Abhinava Sikdar, and Arpan Chattopadhyay.
\newblock Quickest detection of false data injection attack in remote state
  estimation.
\newblock In {\em 2021 IEEE International Symposium on Information Theory
  (ISIT)}, pages 3068--3073, 2021.

\bibitem{chattopadhyay2020dynamic}
Arpan Chattopadhyay and Urbashi Mitra.
\newblock Dynamic sensor subset selection for centralized tracking of an iid
  process.
\newblock {\em IEEE Transactions on Signal Processing}, 2020.

\bibitem{guan2018distributed}
Yanpeng Guan and Xiaohua Ge.
\newblock Distributed attack detection and secure estimation of networked
  cyber-physical systems against false data injection attacks and jamming
  attacks.
\newblock {\em IEEE Transactions on Signal and Information Processing over
  Networks}, 4(1):48--59, 2018.

\bibitem{mo2009secure}
Yilin Mo and Bruno Sinopoli.
\newblock Secure control against replay attacks.
\newblock In {\em Communication, Control, and Computing, 2009. Allerton 2009.
  47th Annual Allerton Conference on}, pages 911--918. IEEE, 2009.

\bibitem{mo2014detecting}
Yilin Mo, Rohan Chabukswar, and Bruno Sinopoli.
\newblock Detecting integrity attacks on scada systems.
\newblock {\em IEEE Transactions on Control Systems Technology},
  22(4):1396--1407, 2014.

\bibitem{ding2020secure}
Derui Ding, Qing-Long Han, Xiaohua Ge, and Jun Wang.
\newblock Secure state estimation and control of cyber-physical systems: A
  survey.
\newblock {\em IEEE Transactions on Systems, Man, and Cybernetics: Systems},
  51(1):176--190, 2020.

\bibitem{guo2017optimal}
Ziyang Guo, Dawei Shi, Karl~Henrik Johansson, and Ling Shi.
\newblock Optimal linear cyber-attack on remote state estimation.
\newblock {\em IEEE Transactions on Control of Network Systems}, 4(1):4--13,
  2017.

\bibitem{chen2017optimal}
Yuan Chen, Soummya Kar, and Jos{\'e}~MF Moura.
\newblock Optimal attack strategies subject to detection constraints against
  cyber-physical systems.
\newblock {\em IEEE Transactions on Control of Network Systems}, 2017.

\bibitem{chen2016cyber}
Yuan Chen, Soummya Kar, and Jos{\'e}~MF Moura.
\newblock Cyber physical attacks with control objectives and detection
  constraints.
\newblock In {\em Decision and Control (CDC), 2016 IEEE 55th Conference on},
  pages 1125--1130. IEEE, 2016.

\bibitem{pasqualetti2013attack}
Fabio Pasqualetti, Florian D{\"o}rfler, and Francesco Bullo.
\newblock Attack detection and identification in cyber-physical systems.
\newblock {\em IEEE Transactions on Automatic Control}, 58(11):2715--2729,
  2013.

\bibitem{li2017detection}
Yuzhe Li, Ling Shi, and Tongwen Chen.
\newblock Detection against linear deception attacks on multi-sensor remote
  state estimation.
\newblock {\em IEEE Transactions on Control of Network Systems}, 2017.

\bibitem{miao2017coding}
Fei Miao, Quanyan Zhu, Miroslav Pajic, and George~J Pappas.
\newblock Coding schemes for securing cyber-physical systems against stealthy
  data injection attacks.
\newblock {\em IEEE Transactions on Control of Network Systems}, 4(1):106--117,
  2017.

\bibitem{guo2018secure}
Ziyang Guo, Dawei Shi, Daniel~E Quevedo, and Ling Shi.
\newblock Secure state estimation against integrity attacks: A gaussian mixture
  model approach.
\newblock {\em IEEE Transactions on Signal Processing}, 67(1):194--207, 2018.

\bibitem{mishra2017secure}
Shaunak Mishra, Yasser Shoukry, Nikhil Karamchandani, Suhas~N Diggavi, and
  Paulo Tabuada.
\newblock Secure state estimation against sensor attacks in the presence of
  noise.
\newblock {\em IEEE Transactions on Control of Network Systems}, 4(1):49--59,
  2017.

\bibitem{pajic2017attack}
Miroslav Pajic, Insup Lee, and George~J Pappas.
\newblock Attack-resilient state estimation for noisy dynamical systems.
\newblock {\em IEEE Transactions on Control of Network Systems}, 4(1):82--92,
  2017.

\bibitem{liu2017dynamic}
Chensheng Liu, Jing Wu, Chengnian Long, and Yebin Wang.
\newblock Dynamic state recovery for cyber-physical systems under switching
  location attacks.
\newblock {\em IEEE Transactions on Control of Network Systems}, 4(1):14--22,
  2017.

\bibitem{nakahira2018attack}
Yorie Nakahira and Yilin Mo.
\newblock Attack-resilient h2, h-infinity, and l1 state estimator.
\newblock {\em IEEE Transactions on Automatic Control}, 2018.

\bibitem{chattopadhyay2019security}
Arpan Chattopadhyay and Urbashi Mitra.
\newblock Security against false data injection attack in cyber-physical
  systems.
\newblock {\em IEEE Transactions on Control of Network Systems}, 2019.

\bibitem{chattopadhyay2018secure}
Arpan Chattopadhyay, Urbashi Mitra, and Erik~G Str{\"o}m.
\newblock Secure estimation in v2x networks with injection and packet drop
  attacks.
\newblock In {\em 2018 15th International Symposium on Wireless Communication
  Systems (ISWCS)}, pages 1--6. IEEE, 2018.

\bibitem{chattopadhyay2018attack}
Arpan Chattopadhyay and Urbashi Mitra.
\newblock Attack detection and secure estimation under false data injection
  attack in cyber-physical systems.
\newblock In {\em 2018 52nd Annual Conference on Information Sciences and
  Systems (CISS)}, pages 1--6. IEEE, 2018.

\bibitem{manandhar2014detection}
Kebina Manandhar, Xiaojun Cao, Fei Hu, and Yao Liu.
\newblock Detection of faults and attacks including false data injection attack
  in smart grid using kalman filter.
\newblock {\em IEEE transactions on control of network systems}, 1(4):370--379,
  2014.

\bibitem{liang2017review}
Gaoqi Liang, Junhua Zhao, Fengji Luo, Steven~R Weller, and Zhao~Yang Dong.
\newblock A review of false data injection attacks against modern power
  systems.
\newblock {\em IEEE Transactions on Smart Grid}, 8(4):1630--1638, 2017.

\bibitem{hu2017secure}
Qie Hu, Dariush Fooladivanda, Young~Hwan Chang, and Claire~J Tomlin.
\newblock Secure state estimation and control for cyber security of the
  nonlinear power systems.
\newblock {\em IEEE Transactions on Control of Network Systems}, 2017.

\bibitem{fawzi2014secure}
Hamza Fawzi, Paulo Tabuada, and Suhas Diggavi.
\newblock Secure estimation and control for cyber-physical systems under
  adversarial attacks.
\newblock {\em IEEE Transactions on Automatic Control}, 59(6):1454--1467, 2014.

\bibitem{guan2017distributed}
Yanpeng Guan and Xiaohua Ge.
\newblock Distributed attack detection and secure estimation of networked
  cyber-physical systems against false data injection attacks and jamming
  attacks.
\newblock {\em IEEE Transactions on Signal and Information Processing over
  Networks}, 4(1):48--59, 2017.

\bibitem{satchidanandan2016dynamic}
Bharadwaj Satchidanandan and Panganamala~R Kumar.
\newblock Dynamic watermarking: Active defense of networked cyber--physical
  systems.
\newblock {\em Proceedings of the IEEE}, 105(2):219--240, 2016.

\bibitem{ge2019distributed}
Xiaohua Ge, Qing-Long Han, Maiying Zhong, and Xian-Ming Zhang.
\newblock Distributed krein space-based attack detection over sensor networks
  under deception attacks.
\newblock {\em Automatica}, 109:108557, 2019.

\bibitem{dorfler2011distributed}
Florian D{\"o}rfler, Fabio Pasqualetti, and Francesco Bullo.
\newblock Distributed detection of cyber-physical attacks in power networks: A
  waveform relaxation approach.
\newblock In {\em 2011 49th Annual Allerton Conference on Communication,
  Control, and Computing (Allerton)}, pages 1486--1491. IEEE, 2011.

\bibitem{choraria2019optimal}
Moulik Choraria, Arpan Chattopadhyay, Urbashi Mitra, and Erik Strom.
\newblock Optimal deception attack on networked vehicular cyber physical
  systems.
\newblock In {\em 2019 53rd Asilomar Conference on Signals, Systems, and
  Computers}, pages 1131--1135. IEEE, 2019.

\bibitem{moradi2019coordinated}
Ashkan Moradi, Naveen~KD Venkategowda, and Stefan Werner.
\newblock Coordinated data-falsification attacks in consensus-based distributed
  kalman filtering.
\newblock In {\em 2019 IEEE 8th International Workshop on Computational
  Advances in Multi-Sensor Adaptive Processing (CAMSAP)}, pages 495--499. IEEE,
  2019.

\bibitem{lu2019malicious}
An-Yang Lu and Guang-Hong Yang.
\newblock Malicious attacks on state estimation against distributed control
  systems.
\newblock {\em IEEE Transactions on Automatic Control}, 65(9):3911--3918, 2019.

\bibitem{poor2009quickest}
H~Vincent Poor and Olympia Hadjiliadis.
\newblock {\em Quickest detection}, volume~40.
\newblock Cambridge University Press Cambridge, 2009.

\bibitem{spall92original-SPSA}
J.C. Spall.
\newblock Multivariate stochastic approximation using a simultaneous
  perturbation gradient approximation.
\newblock {\em IEEE Transactions on Automatic Control}, 37(3):332--341, 1992.

\bibitem{borkar08stochastic-approximation-book}
Vivek~S. Borkar.
\newblock {\em Stochastic approximation: a dynamical systems viewpoint.}
\newblock Cambridge University Press, 2008.

\bibitem{anderson1979optimal}
Brian~DO Anderson and John~B Moore.
\newblock Optimal filtering.
\newblock {\em Englewood Cliffs}, 21:22--95, 1979.

\bibitem{poor2008quickest}
H~Vincent Poor and Olympia Hadjiliadis.
\newblock {\em Quickest detection}.
\newblock Cambridge University Press, 2008.

\bibitem{bertsekas07dynamic-programming-optimal-control-1}
D.P. Bertsekas.
\newblock {\em Dynamic Programming and Optimal Control, Vol. I}.
\newblock Athena Scientific, 2007.

\bibitem{l1983theory}
L.E. L.
\newblock {\em Theory of Point Estimation}.
\newblock Wiley series in probability and mathematical statistics. John Wiley
  and sons, 1983.

\bibitem{premkumar2008Sleepwake}
K.~Premkumar and A.~Kumar.
\newblock Optimal sleep-wake scheduling for quickest intrusion detection using
  wireless sensor networks.
\newblock In {\em IEEE INFOCOM 2008 - The 27th Conference on Computer
  Communications}, pages 1400--1408, 2008.

\bibitem{737522}
Tze~Leung Lai.
\newblock Information bounds and quick detection of parameter changes in
  stochastic systems.
\newblock {\em IEEE Transactions on Information Theory}, 44(7):2917--2929,
  1998.

\bibitem{10.1093/biomet/41.1-2.100}
E.~S. PAGE.
\newblock {CONTINUOUS INSPECTION SCHEMES}.
\newblock {\em Biometrika}, 41(1-2):100--115, 06 1954.

\bibitem{10.1214/aoms/1177693055}
G.~Lorden.
\newblock {Procedures for Reacting to a Change in Distribution}.
\newblock {\em The Annals of Mathematical Statistics}, 42(6):1897 -- 1908,
  1971.

\bibitem{10.1214/aos/1176350164}
George~V. Moustakides.
\newblock {Optimal Stopping Times for Detecting Changes in Distributions}.
\newblock {\em The Annals of Statistics}, 14(4):1379 -- 1387, 1986.

\bibitem{10.1214/aos/1176347761}
Y.~Ritov.
\newblock {Decision Theoretic Optimality of the Cusum Procedure}.
\newblock {\em The Annals of Statistics}, 18(3):1464 -- 1469, 1990.

\bibitem{10.1214/aos/1176346587}
Moshe Pollak.
\newblock {Optimal Detection of a Change in Distribution}.
\newblock {\em The Annals of Statistics}, 13(1):206 -- 227, 1985.

\bibitem{veeravalli2012quickest}
Venugopal~V. Veeravalli and Taposh Banerjee.
\newblock Quickest change detection, 2012.

\end{thebibliography}
}

\end{document}